\newif\iffull
\renewcommand{\infer}[3][]{
\ifthenelse{\equal{#1}{}}{
\inferrule{#2}{#3}
}
{
\inferrule*[right={\scriptsize \textbf{#1}}]
{#2}
{#3}
}
}
\setlist{itemsep=0.0ex,parsep=0pt}    
\newtheorem{theorem}{Theorem}
\newcommand{\ttt}[1]{\texttt{#1}}
\newcommand{\out}[1] {}
\newcounter{codeLineCntr}
\newcommand{\cfbox}[2]{%
    \colorlet{currentcolor}{.}%
    {\color{#1}%
    \fbox{\color{currentcolor}#2}}%
}
\newif\ifnotes
\newcommand{\punt}[1]{}
\renewcommand{\eqref}[1]{Equation~(\ref{eq:#1})}
\newcommand{\abs}[1]{\left| #1\right|}
\newcommand{\ceil}[1]{\left\lceil #1 \right\rceil}
\newcounter{remark}[section]
\newcommand{\todo}[1]{}
\begin{document}

\title{POPQC: Parallel Optimization for Quantum Circuits}
\author{Pengyu Liu}
\email{pengyuliu@cmu.edu}
\affiliation{%
	\institution{Carnegie Mellon University}
	\city{Pittsburgh}
	\state{PA}
	\country{USA}
}

\author{Jatin Arora}
\email{jatina29@gmail.com}
\affiliation{%
	\institution{Carnegie Mellon University}
	\city{Pittsburgh}
	\state{PA}
	\country{USA}
}
\authornote{Currently at Amazon Web Services. This work was completed while at Carnegie Mellon University.}

\author{Mingkuan Xu}
\email{mingkuan@cmu.edu}
\affiliation{%
	\institution{Carnegie Mellon University}
	\city{Pittsburgh}
	\state{PA}
	\country{USA}
}

\author{Umut A. Acar}
\email{umut@cmu.edu}
\affiliation{%
	\institution{Carnegie Mellon University}
	\city{Pittsburgh}
	\state{PA}
	\country{USA}
}

\date{}

\thispagestyle{empty}
\newcommand{\project}{\textsf{PAQC}}
\newcommand{\PN}{\project}
\newcommand{\kwcost}[1]{\mathbf{cost}\left(  {#1} \right)}
\newcommand{\circuitcon}[2]{{#1} + {#2}}

\newcommand{\bigomega}{\mathbf{\Omega}}

\newcommand{\sfbox}[1]
{
\cfbox{blue}{#1}
}
\newcommand{\srule}{\vspace{2mm}\rule{\columnwidth}{1pt}\vspace{2mm}}

\newcommand{\lang}{\textsc{Laqe}}
\newcommand{\dom}[1]{\mathop{\text{dom}}(#1)}
\newcommand{\codom}[1]{\mathop{\text{cod}}(#1)}
\newcommand{\kw}[1]{\mbox{\ttt{#1}}}
\newcommand{\cdparens}[1]{({#1})}
\newcommand{\cd}[1]{{\lstinline!#1!}}
\newcommand{\hmm}{\textsf{HMM}}
\newcommand{\rulename}[1]{\textsc{#1}}
\newcommand{\ruleref}[1]{Rule~\rulename{#1}}

\newcommand{\true}{\ensuremath{\kw{true}}}
\newcommand{\false}{\ensuremath{\kw{false}}}
\newcommand{\ttrue}{\kw{t}}
\newcommand{\ffalse}{\kw{f}}
\newcommand{\prog}{\ensuremath{P}}
\newcommand{\pred}{\ensuremath{\mathcal{P}}}
\newcommand{\predf}[2]{\ensuremath{\pred(#1,#2)}}
\newcommand{\defeq}{\triangleq}

\newcommand{\type}[2]{\ensuremath{#1 : #2}}
\newcommand{\typed}[4]{\ensuremath{#1 \vdash_{#2} \type{#3}{#4}}}

\newcommand{\btype}{\beta}
\newcommand{\utype}{\theta}
\newcommand{\val}{v}
\newcommand{\uval}{u}
\newcommand{\name}{\textsf{POPQC}}

\newcommand{\oac}{\textsf{OAC}}
\newcommand{\algnameminus}{\textsf{OACMinus}}
\newcommand{\coam}{\textsf{OAC}}
\newcommand{\lopt}{\textsf{Lopt}}
\newcommand{\coamwith}[1]{\ensuremath{\mathsf{SOAM}[{#1}]}}
\newcommand{\queso}{{\textsf{Queso}}}
\newcommand{\voqc}{{\textsf{VOQC}}}
\newcommand{\pyzx}{{\textsf{PyZX}}}
\newcommand{\quartz}{{\textsf{Quartz}}}
\newcommand{\quartztool}{$\mathsf{Quartz}$}
\newcommand{\quesotool}{$\mathsf{Queso}$}
\newcommand{\feyntool}{\textsf{FeynOpt}}

\newcommand{\quartzt}[1]{\ensuremath{\mathsf{Quartz}_{\,#1}}}
\newcommand{\quesot}[1]{\ensuremath{\mathsf{Queso}_{\,#1}}}
\newcommand{\coamt}[1]{\ensuremath{\mathsf{SOAM}[#1]}}
\newcommand{\clifft}{Clifford+T}

\newcommand{\compat}[2]{\ensuremath{{#1} \mathbin{\scaleobj{1.2}{\diamond}} {#2}}}
\newcommand{\notcompat}[2]{\ensuremath{{#1} \mathbin{\scaleobj{1.2}{\centernot{\diamond}}} {#2}}}
\newcommand{\windowopt}[2]{{#2}~\textsf{\textbf{segment-optimal}}_{#1}}
\newcommand{\wopttext}{segment optimal}
\newcommand{\compressed}[1]{{#1}~\textsf{\textbf{compact}}}
\newcommand{\locallyopt}[2]{{#2}~\textsf{\textbf{locally-optimal}}_{#1}}
\newcommand{\qubits}[1]{\mathsf{qubits}({#1})}
\newcommand{\tmemprog}{memory-progress\xspace}
\newcommand{\tmempres}{memory-preservation\xspace}

\newcommand{\kwint}{\kw{int}}
\newcommand{\kwnat}{\kw{nat}}
\newcommand{\kwfut}{\kw{fut}}
\newcommand{\kwprod}[2]{\ensuremath{{#1} \times {#2}}}
\newcommand{\kwarr}[2]{\ensuremath{{#1} \ra {#2}}}
\newcommand{\kwloc}[1]{\ensuremath{{#1}~\kw{loc}}}
\newcommand{\kwref}[1]{\ensuremath{{#1}~\kw{ref}}}

\newcommand{\kwtid}{\kw{tid}}
\newcommand{\kwunit}{\kw{unit}}
\newcommand{\kwok}{\kw{ok}}

\newcommand{\heap}{H}
\newcommand{\spc}{H}
\newcommand{\empspc}{\emptyset}
\newcommand{\spaceext}[3]{{#1}[{#2} \mapsto {#3}]}

\newcommand{\catspace}{\uplus}
\newcommand{\catheap}{\uplus}

\newcommand{\heapun}[1]{\langle #1 \rangle}
\newcommand{\heapbi}[2]{\langle #1 ; #2 \rangle}
\newcommand{\heaptri}[3]{\langle #1 ; #2; #3 \rangle}
\newcommand{\heapquad}[4]{\langle #1 ; #2 ; #3 ; #4 \rangle}
\newcommand{\restctx}[2]{\ensuremath{#1 \upharpoonright_{#2}}}
\newcommand{\freeloc}[1]{\ensuremath{\mathsf{FL}(#1)}}
\newcommand{\locs}[1]{\ensuremath{\mathsf{Loc}(#1)}}
\newcommand{\diff}[1]{\ensuremath{\mathsf{Diff}(#1)}}

\newcommand{\rename}[3]{[#2 \mapsto #3](#1)}
\newcommand{\kwt}{\kw{t}}

\newcommand{\estore}{[~]}
\newcommand{\mkstore}[2]{\ensuremath{{#1}::{#2}}}

\newcommand{\kwn}{\kw{n}}
\newcommand{\kwlet}[3]{\kw{let}~{#1}={#2}~\kw{in}~{#3}~\kw{end}}
\newcommand{\kwfun}[3]{\ensuremath{\kw{fun}~{#1}~{#2}~\kw{is}~{#3}~\kw
{end}}}
\newcommand{\kwpair}[2]{\ensuremath{\langle{#1},{#2}}\rangle}
\newcommand{\kwapply}[2]{\ensuremath{{#1}~{#2}}}
\newcommand{\kwfst}[1]{\ensuremath{\kw{fst}\cdparens{#1}}}
\newcommand{\kwsnd}[1]{\ensuremath{\kw{snd}\cdparens{#1}}}
\newcommand{\gcing}[1]{\ensuremath{[#1]}}
\newcommand{\kwnew}[1]{\ensuremath{\kw{ref}(#1)}}
\newcommand{\kwderef}[1]{\ensuremath{\mathop{!}#1}}
\newcommand{\kwwrite}[2]{\ensuremath{#1 \mathop{:=} #2}}

\newcommand{\kwletrec}[2]{\ensuremath{#1 \mathop{\cdot} #2}}
\newcommand{\kwtask}[3]{\ensuremath{#1 \mathop{\cdot} #2 \mathop{\cdot} #3}}
\newcommand{\kwtaskalt}[2]{\ensuremath{#1 \mathop{\cdot} #2}}
\newcommand{\halt}{\bot}
\newcommand{\tree}{T}
\newcommand{\trace}{t}
\newcommand{\kwfork}[1]{{\ensuremath{\kw{fork}\cdparens{#1}}}}
\newcommand{\kwjoin}[1]{{\ensuremath{\kw{join}\cdparens{#1}}}}
\newcommand{\kwunitv}{\ensuremath{(\,)}}
\newcommand{\kwtidv}{\ensuremath{\kw{t}}}
\newcommand{\gheap}{G}
\newcommand{\lheap}{\heap}
\newcommand{\tolheap}[1]{\ensuremath{\Delta(#1)}}
\newcommand{\theap}[2]{\left(#1, #2\right)}

\newcommand{\cdpar}{\texttt{par}}
\newcommand{\kwpar}[2]
           {\ensuremath{\mathop{\vartriangleleft \hspace{-0.1em} #1, #2
               \hspace{-0.1em} \vartriangleright}}}
\newcommand{\kwpara}[2]
           {\ensuremath{\mathop{\blacktriangleleft \hspace{-0.1em} #1, #2
               \hspace{-0.1em} \blacktriangleright}}}
\newcommand{\kwparl}[2]{\ensuremath{#1 \overset{\leftarrow}{\|} #2}}
\newcommand{\kwparr}[2]{\ensuremath{#1 \overset{\rightarrow}{\|} #2}}
\newcommand{\task}{T}
\newcommand{\config}{\mathcal{C}}

\newcommand{\flate}[1]{\hat{#1}}
\newcommand{\flatten}[3]{\left\| #2 \right\|_{#1} \leadsto #3}
\newcommand{\fstep}{\step}

\renewcommand{\a}{\ensuremath{\alpha}}
\renewcommand{\b}{\ensuremath{\beta}}
\newcommand{\h}{\ensuremath{\eta}}
\renewcommand{\r}{\ensuremath{\rho}}
\newcommand{\p}{\ensuremath{P}}
\newcommand{\s}{\p}
\newcommand{\om}{\ensuremath{\Omega}}
\renewcommand{\l}{\ensuremath{l}}
\newcommand{\sig}{\ensuremath{\Sigma}}
\newcommand{\empctx}{\ensuremath{\cdot}}

\newcommand{\la}{\leftarrow}
\newcommand{\ra}{\rightarrow}
\newcommand{\pstep}{\Rightarrow}
\newcommand{\tstep}{\Rightarrow}
\newcommand{\optstep}{\longmapsto}
\newcommand{\compstep}{\longmapsto_{\delta}}
\newcommand{\localstep}[3]{{#2} \overset{#1}{\optstep} {#3}}
\newcommand{\globstep}[2]{{#1} \compstep {#2}}
\newcommand{\step}{\ra}
\newcommand{\stepgc}[1]{\xra[{\mbox{\tiny GC}}]{#1}}
\newcommand{\gcstep}{\ra_{\mbox{\tiny GC}}}
\newcommand{\pgcstep}{\pstep_{\mbox{\tiny GC}}}
\newcommand{\cgcstep}{\rightarrow_{\mbox{\tiny CGC}}}


\newcommand{\fresh}{\ensuremath{\; \mathsf{fresh}}}
\newcommand{\starrow}[1]{\xrightarrow{#1}}
\newcommand{\alloc}[4]{#1; #2 \starrow{alloc} #3; #4}
\newcommand{\update}[4]{#1; #2; #3 \starrow{update} #4}
\newcommand{\lookup}[3]{#1; #2 \starrow{lookup} #3}
\newcommand{\newtask}[2]{#1 \starrow{new} #2}
\newcommand{\isdone}[3]{#1 \starrow{done} #2; #3}
\newcommand{\diffs}[1]{\mathit{diff}(#1)}
\newcommand{\initial}{\ensuremath{\;\mathsf{initial}}}
\newcommand{\htyped}[3]{\vdash_{#3} #1 : #2}

\newcommand{\heaptype}[3]{\left(#1, #2\right) : #3}
\newcommand{\allocg}[4]{\mathit{allocg}\left(#1, #2\right) = \left(#3, #4\right)}
\newcommand{\allocl}[4]{\mathit{allocl}\left(#1, #2\right) = \left(#3, #4\right)}
\newcommand{\promote}[6]{#1; #2; #3 \starrow{promote} #4; #5; #6}
\newcommand{\promotebrl}[3]{#1; #2; #3}
\newcommand{\promotebra}{\starrow{promote}}
\newcommand{\promotebrr}[3]{#1; #2; #3}
\newcommand{\pmap}{M}
\newcommand{\greachable}[1]{\mathsf{greachable}\left(#1\right)}

\newtheorem{thm}[theorem]{Theorem}
\newtheorem{lem}[theorem]{Lemma}

\newenvironment{rulearray}
{
\newcommand{\newcol}{\qquad}
\newcommand{\newcolhalf}{\quad}
\newcommand{\newrow}{\\[4ex]}
\newcommand{\newrowhalf}{\\[2ex]}
\[
\begin{array}{c}
}
{
\end{array}
\]
\let\newcol\undefined
\let\newrow\undefined
}

\newcommand{\ramtodo}[2][]
{\todo[color=magenta,author=Ram,size=\small,#1]{#2}}

\newcommand{\defn}[1]{\emph{\textbf{#1}}}
\newcommand{\mpl}{\textsf{MPL}}

\newcommand{\rulereftwo}[2]{rules~\rulename{#1} and \rulename{#2}}
\newcommand{\with}{\ensuremath{\mathbin;}}

\newcommand{\highlight}[1]{\colorbox{gray!20}{\ensuremath{#1}}}
\newcommand{\hred}[1]{\colorbox{red!10}{\ensuremath{#1}}}
\newcommand{\hblue}[1]{\colorbox{blue!10}{\ensuremath{#1}}}
\newcommand{\hgreen}[1]{\colorbox{green!10}{\ensuremath{#1}}}
\newcommand{\sizeof}[1]{\ensuremath{\lvert #1 \rvert}}
\newcommand{\costof}[1]{\ensuremath{\mathbf{cost} ({#1})}}
\newcommand{\cost}{\ensuremath{\mathbf{cost}}}
\newcommand{\oracle}{\ensuremath{\mathbf{oracle}}}

\definecolor{darkblue}{HTML}{0007C9}
\newcommand{\mh}[1]{{\ensuremath{\mbox{\ensuremath{#1}}}}}

\newcommand{\ctxemp}{\ensuremath{\cdot}}
\newcommand{\ctxext}[3]{\ensuremath{#1,#2\!:\!#3}} 
\newcommand{\etyped}[4]{\ensuremath{{#1} \vdash_{#2} {#3} : {#4}}}
\newcommand{\memtyped}[3]{\ensuremath{{#1} \vdash {#2} : {#3}}}
\newcommand{\gtyped}[3]{\ensuremath{{#1} \vdash {#2} : {#3}}}
\newcommand{\httyped}[6]{\ensuremath{{#1} \with {#2} \with {#3} \vdash {#4}\!\cdot\!{#5} : {#6}}}
\newcommand{\ttyped}[5]{\ensuremath{{#1} \with {#2} \with {#3} \vdash {#4} : {#5}}}

\newcommand{\sttyped}[6]{\ensuremath{{\vdash_{#1} {#2} \with {#3} \with {#4} \with {#5} : {#6}}}}
\newcommand{\getyped}[6]{\ensuremath{{#1} \vdash_{#2, #3} {#4} \with {#5} : {#6}}}

\newcommand{\typnat}{\kw{nat}}
\newcommand{\typint}{\kw{int}}
\newcommand{\typbool}{\kw{bool}}
\newcommand{\typchar}{\kw{char}}
\newcommand{\typfloat}{\kw{float}}
\newcommand{\typprod}[2]{\ensuremath{{#1} \times {#2}}}
\newcommand{\typfun}[2]{\ensuremath{{#1}\!\rightarrow\!{#2}}}
\newcommand{\typref}[1]{\ensuremath{{#1}~\kw{ref}}}
\newcommand{\typfut}[1]{\ensuremath{{#1}~\kw{fut}}}
\newcommand{\futs}[1]{\mathsf{Fut}(#1)}
\newcommand{\futsmem}[2]{\mathsf{Fut}(#1, #2)}

\newcommand{\enat}[1]{\ensuremath{#1}}
\newcommand{\efun}[3]{\ensuremath{\kw{fun}~{#1}~{#2}~\kw{is}~{#3}}}
\newcommand{\epair}[2]{\ensuremath{\langle {#1}, {#2} \rangle}}
\newcommand{\eapp}[2]{\ensuremath{{#1}~{#2}}}
\newcommand{\efst}[1]{\ensuremath{\kw{fst}~{#1}}}
\newcommand{\esnd}[1]{\ensuremath{\kw{snd}~{#1}}}
\newcommand{\eref}[1]{\ensuremath{\kw{ref}~{#1}}}
\newcommand{\ebang}[1]{\ensuremath{\mathop{!}#1}}
\newcommand{\eupd}[2]{\ensuremath{#1 \mathop{:=} #2}}
\newcommand{\elet}[3]{\kw{let}~{#1}={#2}~\kw{in}~{#3}}
\newcommand{\epar}[2]{\ensuremath{\langle {#1}\mathbin\|{#2} \rangle}}

\newcommand{\purelang}{{\sc $\lambda^{P}$}}
\newcommand{\reflang}{{\sc $\lambda^{U}$}}


\newcommand{\mememp}{\emptyset}
\newcommand{\memext}[3]{\ensuremath{#1}[{#2} \!\hookrightarrow\! {#3}]}

\newcommand{\actarrow}{\blacktriangleright}
\newcommand{\pasarrow}{\vartriangleright}
\newcommand{\fmap}{\Delta}
\newcommand{\femp}{\emptyset}
\newcommand{\fmapactive}[3]{\ensuremath{#1} [{#2} \!\actarrow\! {#3}]}
\newcommand{\fmapjoined}[3]{\ensuremath{#1} [{#2} \!\pasarrow\! {#3}]}

\newcommand{\futctxt}{\Knownctxt}
\newcommand{\Futctxt}{\Knownctxt}
\newcommand{\ReadLocs}{\mathsf{R}}
\newcommand{\Knownctxt}{{K}}
\newcommand{\fut}[2]{\kw{fut}(#1; #2)}
\newcommand{\harpfut}[1]{\kw{fut}(#1)}
\newcommand{\futctxtemp}{\emptyset}
\newcommand{\te}[1]{\{#1\}}
\newcommand{\hemp}{\emptyset}
\newcommand{\hcat}{\cup}
\newcommand{\hext}[2]{{#1},{#2}}

\newcommand{\tack}{\oplus}
\newcommand{\plug}{\bowtie}

\newcommand{\omparam}{step length}
\newcommand{\actwrite}[2]{\textbf{U}{#1}\!\Leftarrow\!{#2}}
\newcommand{\actalloc}[2]{\textbf{A}{#1}\!\Leftarrow\!{#2}}
\newcommand{\actread}[2]{\textbf{R}{#1}\!\Rightarrow\!{#2}}
\newcommand{\actsync}[2]{\textbf{F}{#1}\!\Rightarrow\!{#2}}
\newcommand{\actnone}{\textbf{N}}

\newcommand{\stepstar}{\longmapsto^*}
\newcommand{\tstepstar}{\tstep^*}
\newcommand{\drfstep}[2]{\xmapsto[{#2}]{{\,#1\,}}}
\newcommand{\drfstepstar}[1]{\xmapsto{{\,#1\,}}\joinrel\mathrel{^*}}

\newcommand{\gt}[2]{\ensuremath{\mathsf{GT}({#1},{#2})}}
\newcommand{\gemp}{\bullet}
\newcommand{\gseq}[2]{{#1}\oplus{#2}}
\newcommand{\gseqnamed}[3]{{#1}\oplus_{#2}{#3}}
\newcommand{\gseqa}[2]{\gseqnamed{#1}{a}{#2}}
\newcommand{\gseqb}[2]{\gseqnamed{#1}{b}{#2}}
\newcommand{\gspawn}[1]{\mathsf{spawn}\ {#1}}
\newcommand{\gsync}[1]{\mathsf{sync}\ {#1}}
\newcommand{\ghead}[1]{\mathsf{hd}(#1)}
\newcommand{\gtail}[1]{\mathsf{tl}(#1)}

\newcommand{\fcpar}[3]{\ensuremath{\gseq{#1}{(\gpar{#2}{#3})}}}

\newcommand{\gmerge}[2]{\bowtie_F ({#1}, {#2})}
\newcommand{\gmergerel}[3]{\bowtie_R ({#1}, {#2}) \downarrow {#3}}
\newcommand{\gcseq}[1]{\ensuremath{[#1]}}
\newcommand{\gcpar}[3]{\ensuremath{\gseq{#1}{(\gpar{#2}{#3})}}}
\newcommand{\gcparnamed}[4]{\ensuremath{\gseq{#1}{({#2}\otimes_{#3}{#4})}}}
\newcommand{\gcspawn}[4]{\ensuremath{\gseq{#1}{\gseq{#2}{(\gpar{#3}{#4})}}}}

\newcommand{\gpar}[2]{{#1}\otimes_{a}{#2}}
\newcommand{\gw}[1]{\ensuremath{\mathsf{W}({#1})}}
\newcommand{\ga}[1]{\ensuremath{\mathsf{A}({#1})}}
\newcommand{\greads}[1]{\ensuremath{\ReadLocs({#1})}}
\newcommand{\gaw}[1]{\ensuremath{\mathsf{AW}({#1})}}
\newcommand{\lw}[1]{\ensuremath{\mathsf{LW}({#1})}}
\newcommand{\alw}[1]{\ensuremath{\mathsf{A}({#1}) \cup \mathsf{LW}({#1})}}
\newcommand{\gabw}[1]{\ensuremath{\gaw{#1}}}

\newcommand{\saw}[1]{\ensuremath{\mathsf{SP}({#1})}}
\newcommand{\gf}[1]{\ensuremath{\overline{#1}}}

\newcommand{\extendsfj}[2]{\ensuremath{{#1}~\textsf{extends}~{#2}~\textsf{with f/j}}}
\newcommand{\extendswith}[3]{\ensuremath{{#1}~\textsf{extends}~{#2}~\textsf{with}~{#3}}}

\newcommand{\geok}[2]{{#1} \with {#2}~\textit{ok}}
\newcommand{\loc}[1]{{#1}~\textit{loc}}
\newcommand{\gleaf}[1]{{#1}~\textit{leaf}}
\newcommand{\gnode}[1]{{#1}~\textit{node}}

\newcommand{\drf}[2]{{#1} \vdash {#2}~{\textit{drf}}}
\newcommand{\drfb}[2]{{#1} \vdash {#2}~{\textit{wrf}}}
\newcommand{\drft}{\textit{drf}}

\theoremstyle{plain}
\newtheorem{property}{Property}


\newcommand{\flushLR}[3]{\hspace*{#3}\makebox[0em][l]{#1}\hspace*{\fill}\makebox[0em][r]{#2}\hspace*{#3}}
\newcommand{\rulesdesc}[2]{\textbf{#1}\hspace*{1em}{\fbox{#2}}}
\newcommand{\desc}[1]{\textbf{#1}}

\newdimen\zzlistingsize
\newdimen\zzlistingsizedefault
\zzlistingsizedefault=9pt
\newdimen\kwlistingsize
\kwlistingsize=9pt
\zzlistingsize=\zzlistingsizedefault
\gdef\lco{black}

\begin{abstract}

	Optimization of quantum programs or circuits is a fundamental
  problem in quantum computing and remains a major challenge.
%
  State-of-the-art quantum circuit optimizers rely on heuristics and
  typically require superlinear, and even exponential, time.
	Recent work~\cite{oac2025} proposed a new approach that pursues a weaker form
	of optimality called local optimality.
  Parameterized by a natural number $\Omega$, local optimality insists
  that each and every $\Omega$-segment of the circuit is optimal with
  respect to an external optimizer, called the \emph{oracle}.
	%
	%
	Local optimization can be performed using only a linear number of
  calls to the oracle but still incurs quadratic computational
  overheads in addition to oracle calls.
	Perhaps most importantly, the algorithm is sequential.

	In this paper, we present a parallel algorithm for local optimization of
	quantum circuits.
	To ensure efficiency, the algorithm operates by keeping a set of
  \emph{fingers} into the circuit and maintains the invariant that a
  $\Omega$-deep circuit needs to be optimized only if it contains a
  finger.
	Operating in rounds, the algorithm selects a set of fingers,
  optimizes in parallel the segments containing the fingers, and
  updates the finger set to ensure the invariant.
	For constant $\Omega$, we prove that the algorithm requires $O(n\lg{n})$ work
	and $O(r\lg{n})$ span, where $n$ is the circuit size and $r$ is the number of
	rounds.
	We prove that the optimized circuit returned by the algorithm is locally
	optimal in the sense that any $\Omega$-segment of the circuit is optimal
	with respect to the oracle.

	To assess the algorithm's effectiveness in practice, we implement it in the
	Rust programming language and evaluate it by considering a range of quantum benchmarks and
	several state-of-the-art optimizers.
	The evaluation shows that the algorithm is work efficient and scales
  well as the number of processors (cores) increases.
	On our benchmarks, the algorithm outperforms existing optimizers, which are all
	sequential, by as much as several orders of magnitude, without degrading the
	quality.
	Our code is available on GitHub at \url{https://github.com/UmutAcarLab/popqc}.

\end{abstract}

\begin{CCSXML}
<ccs2012>
   <concept>
       <concept_id>10003752.10003809.10010170</concept_id>
       <concept_desc>Theory of computation~Parallel algorithms</concept_desc>
       <concept_significance>500</concept_significance>
       </concept>
   <concept>
       <concept_id>10010583.10010786.10010813.10011726</concept_id>
       <concept_desc>Hardware~Quantum computation</concept_desc>
       <concept_significance>500</concept_significance>
       </concept>
 </ccs2012>
\end{CCSXML}

\ccsdesc[500]{Theory of computation~Parallel algorithms}
\ccsdesc[500]{Hardware~Quantum computation}

\keywords{Quantum circuit optimization, parallel algorithms, quantum computation, local optimization}

\maketitle
\section{Introduction}

Quantum computing is an emerging field at the intersection of computer science,
physics, and mathematics.
By using quantum bits or qubits that offer exponential advantage by operating
in a superposition of states, quantum computers hold the promise of major
breakthroughs
in numerous fields including
quantum simulation\cite{Feynman82,Benioff80},
optimization\cite{childs2017quantum, peruzzo2014variational},
cryptography\cite{shor1994algorithms}, and
machine learning\cite{biamonte2017quantum, schuld2015introduction}.
Although quantum computer hardware has made significant advances in recent
years, including superconducting\cite{kjaergaard2020superconducting}, trapped
ions\cite{monroe2021programmable, moses2023race}, and Rydberg atom
arrays\cite{ebadi2021quantum, scholl2021quantum}, they still suffer from
numerous limitations, including a natural tendency to decohere, or
imperfections in the quantum gates, which can lead to accumulating errors or
``noise''.
Due to the modest scale and noisy nature of quantum computers, the modern and
near-term quantum era is sometimes referred to as the NISQ (Noisy
Intermediate-Scale Quantum) era\cite{preskill2018quantum}.
%


In this context, optimization of quantum programs or circuits has emerged as an
important area of research.
By optimizing the quantum circuit, circuit optimizers can reduce the number of
gates and thus the number of operations in the circuit.
In addition to increasing efficiency, quantum optimizers can improve circuit
fidelity by reducing errors due to decoherence, and even make computations
possible that are otherwise impossible (by bringing the computation within the
decoherence envelope of the quantum computer).

Optimization of quantum circuits, however, is a challenging problem:
global optimization is QMA-hard\cite{Nam_2018}, which is believed to be beyond
the reach of even quantum computers.
Practical optimizers therefore typically rely on heuristics.
For example, Nam et al.'s rule-based optimization\cite{Nam_2018} and Hietala's
verified implementation\cite{hietala2021verified} lack quality guarantees and
require quadratic time in the size of the circuit.
Xu et al.'s automated search-based optimizations\cite{quartz-2022,queso-2023}
face severe scalability issues because they rely on an exponential-time search
algorithm to apply optimization rules.
For these reasons, state-of-the-art optimizers can struggle to optimize
larger quantum circuits within a reasonable amount of time (e.g., within
hours).
As quantum computing continues to move from the NISQ era to the FASQ (Fault-Tolerant Application-Scale Quantum) era\cite{preskill2024beyond},
the efficiency and quality limitations of quantum optimizers become even more
important, because quantum circuits are expected to contain millions of gates.
Even as the scale of the circuit optimization challenge increases, quantum
circuit optimization approaches remain sequential.
We don't know of any provably and practically efficient parallel algorithms for
this task.

Recently, Arora et al. proposed an approach to quantum circuit optimization
that is able to offer both quality guarantees and efficiency\cite{oac2025}
with respect to a given \emph{oracle} optimizer.
Their algorithm assumes that the oracle optimizer works well for small to
moderate circuits of up to a few thousand gates and is parameterized by a
natural number $\Omega$.
The algorithm cuts the circuit into $\Omega$-segments, a block containing $\Omega$ gates, optimizes each
segment by using the oracle, and melds the optimized segments by optimizing
along the seams of the cuts.
The algorithm then compresses the circuit by moving all gates to the beginning
of the circuit as much as possible, thus minimizing the gaps in the circuit,
which can reduce the effectiveness of the optimizer.
By repeating this cut-optimize-meld-compress process until convergence, Arora
et al.\cite{oac2025} prove that their algorithm can guarantee that the output
circuit has the local optimality property, which ensures that any $\Omega$-wide
segment in the circuit is optimal with respect to the oracle.
The key to this property is the meld algorithm that ``propagates''
optimizations in one segment to adjacent segments.
They also prove, under some reasonable assumptions, that the algorithm makes a
linear number of calls to the oracle and show that the algorithm can improve
performance significantly, leading to about an order of magnitude improvement
without degrading quality.
Although Arora et al.'s work has made significant progress by showing a path to
improving the efficiency of quantum circuit optimization, their algorithm is
sequential.
Notably, the meld operations are inherently sequential, as they propagate
optimizations from one segment to the other by sequentially optimizing segments
in a sliding-window style.
Furthermore, their algorithm incurs quadratic overheads to implement the cut,
meld, and compress operations on circuits.

In this paper, we present an efficient parallel algorithm for local
optimization.
As with Arora et al.'s algorithm, our algorithm relies on an external oracle to
optimize circuit segments and takes a parameter $\Omega$.
Unlike their algorithm, our algorithm avoids cut and meld operations for
reasons of efficiency and parallelism.
The algorithm instead keeps a set of \emph{fingers} into the circuit and
maintains the invariant that all unoptimized $\Omega$-segments of the
circuit contain a finger.
Operating in rounds, the algorithm selects a set of non-interfering fingers,
optimizes in parallel the segments containing the fingers, and updates the
fingers to ensure the invariant.
To expose parallelism, the algorithm only uses a parallel-map construct, which
may be implemented in many ways, e.g., by using fork-join primitives.

To support efficient parallel access and updates to the quantum circuit, we
represent the circuit with a sparse array of gates that allows gates to be
removed, and pair it with an \emph{index tree} that allows finding all
non-deleted gates.
Using this representation, we prove that the total work (uniprocessor time) of
the algorithm is $O(n(\Omega\lg{n}+W))$, where $n$ is the number of gates in the
input circuit and $W$ is the work of the oracle on $2\Omega$-segments.
In practice, $\Omega$ is a moderate constant (in hundreds), leading to constant
work $W$ for the oracle, and $O(n\lg{n})$ work for the optimizer.
For span (parallel time), we show a bound $O(r(\lg{n}+S))$, where $r$ is the
number of rounds required by the algorithm, $n$ is the number of gates in the
input circuit, and $S$ is the span of the oracle optimizer (equals to $W$ for sequential
oracles).
This bound shows that the algorithm delivers significant parallelism in each
round, with the caveat that total parallelism may be limited if the number of
rounds is large.
Fortunately, in practice, we observe that the number of rounds is a modest
number (fewer than $100$ in most of our experiments) and the algorithm exhibits
significant parallelism.

To establish a quality guarantee on the optimized circuits, we prove that the
optimized circuit returned by the algorithm is locally optimal, meaning any
$\Omega$-segment of the circuit is optimal with respect to the oracle.
This shows that our approach can guarantee some degree of quality of the
output while also ensuring efficiency and parallelism.

To assess the practicality of the algorithm, we present an implementation in
the Rust language and evaluate the algorithm by using a number of quantum
benchmarks.
The evaluation shows that the number of rounds required by the algorithm is
small relative to the circuit size, and therefore the algorithm scales well to
multiple cores in practice, especially as the input circuits grow larger.
The evaluation also shows that the constant factors hidden in our asymptotic
analysis are small and that the algorithm is fast in practice: single-processor
runs of our algorithm outperform state-of-the-art sequential optimizers.
As a combined effect of the small constant factors and parallelism, our
optimizer delivers orders of magnitude speedups over existing optimizers,
especially as circuit sizes grow, while incurring no noticeable degradation in
the quality of optimization.
Notably, on a 64-core computer, our parallel optimizer can optimize circuits in
seconds that existing optimizers are unable to optimize within 24 hours of
compute time.
This result shows that parallel optimization of quantum circuits can be
effective, both in terms of performance and quality.

The specific contributions of the paper include the following:
\begin{itemize}
	\item A parallel algorithm for optimizing quantum circuits.
	\item Bounds on the work and span of the algorithm.
	\item Proof that the algorithm returns a locally optimal circuit.
	\item An implementation of the algorithm in the Rust language.
	\item An evaluation of the algorithm by considering multiple oracle optimizers and
	      challenging quantum benchmarks.
\end{itemize}
\section{Background}
\label{sec:background}
In this section, we introduce basic concepts of quantum computing and establish the notation used throughout this paper.
\subsection{Quantum States}
A quantum bit or a \defn{qubit} is the basic unit of quantum computation. The
state of a qubit $\ket{\psi}$ can be represented as a linear superposition of
basis states: $\ket{\psi} = \alpha\ket{0} + \beta\ket{1}$, where $\alpha,\beta
	\in \mathbb{C}$ satisfy the normalization condition $|\alpha|^2 + |\beta|^2 =
	1$. We can also write it as a vector for mathematical manipulation: $\begin{bmatrix} \alpha & \beta \end{bmatrix}$. For an $n$-qubit system, the quantum state becomes a superposition of $2^n$ basis states:
$\ket{\psi} = \sum_{i\in\{0,1\}^n} \alpha_i \ket{i}$, and can be written as a $2^n$-dimensional vector.

A quantum algorithm realizes a unitary transformation $U$, a $2^n\times 2^n$
matrix with $UU^\dagger = I$. When applied to the initial state $\ket{0}$, the
state becomes $U\ket{0} = \ket{\psi}$, which contains the result of the
algorithm.

\subsection{Gates and Circuits}
\label{sec:bg:circuits}
A unitary is neither implementable on a quantum computer nor representable due
to its exponential size. So, we use a quantum circuit $\mathcal{C}$ to describe
a quantum algorithm, which can be represented as a sequence of gates. A gate
only acts on a small number of qubits (usually one or two), and the other
qubits are left unchanged. We use $[g]$ and $[\mathcal{C}]$ to denote the
matrix representation of a gate $g$ and a circuit $\mathcal{C}$, respectively.
For a single-qubit gate $g$ that acts on qubit $i$, the matrix representation
is $[g]=I^{\otimes i} \otimes U \otimes I^{\otimes n-i-1}$, where $n$ is the
total number of qubits and $U$ is a $2\times 2$ unitary matrix. The matrix
representation of the whole circuit $\mathcal{C}:g_1,g_2,\ldots,g_k$ is the
product of the matrix representations of all the gates: $[\mathcal{C}] = [g_k][g_{k-1}] \ldots [g_1]$. From the matrix representation, we can see that a
fundamental property of quantum circuits is that any subcircuit is
interchangeable with any equivalent subcircuit (those implementing the same
unitary transformation) because matrix multiplication is associative.

The most straightforward representation of a quantum circuit is the gate
sequence representation, where a circuit is represented as a sequence of gates.
Other representations also exist: for example, the layered representation,
where circuits are organized into layers of independent gates. Two gates are
defined as \defn{independent} if they act on disjoint sets of qubits. This
layered representation is valuable because it directly maps to parallel
execution schedules on quantum hardware, with the number of layers or
\defn{depth} serving as a natural indicator of the running time of the circuit,
which is a quantum analogue of span in classical computing.

\subsection{Quantum Circuit Optimization}
Circuit optimization transforms an input quantum circuit into an equivalent
circuit (same unitary) that minimizes some cost function, with some examples
being the number of gates, the circuit depth, the number of non-Clifford gates,
the number of two-qubit gates, etc.

The optimization of large quantum circuits presents significant challenges due
to the exponential growth in the dimensionality of the underlying unitary
transformations with increasing qubit count. Indeed, global optimization of
quantum circuits has been proven to be QMA-hard\cite{janzing2003identity}. As
a result, most existing quantum circuit optimization methods are local, meaning
that they only optimize a small part of the circuit at a time.

\subsection{Parallelism Model}

We specify the parallel algorithms using traditional algorithmic style, written
in pseudocode, and use traditional work and span (depth) analysis
(e.g.,\cite{ab-book-algorithms}).
We use a parallel-map (\textbf{parmap}) primitive as the only means of exposing
parallelism.
This primitive maps over the elements of a collection in parallel and
computes some value for each element, returning the collection of
result values.
When analyzing work and span, we assume that this primitive adds a
logarithmic (in the number of iterations) cost to the span of each
iteration.
This is a conservative assumption and it is realistic for the multicore
architecture, where parallel maps can be implemented with a
logarithmic-depth fork-join tree.
We note that for our specific algorithm and analysis, a stronger assumption
such as constant-span parallel-map would not improve our bounds, because the
iterations all have at least logarithmic span.

\section{A Parallel Data Structure for Quantum Circuits}

To optimize a quantum circuit in parallel, we propose a specialized data
structure that enables efficient parallel access and manipulation of gates. Our
data structure addresses a key challenge in quantum circuit optimization: as
the optimization process progresses, the circuit becomes increasingly sparse
due to gate removals. This sparsity requires efficient mechanisms for locating
neighboring gates without scanning the entire circuit.
%
%
%
%
\begin{algorithm}[ht]
	\caption{The interface for the quantum circuit data structure. For cost bounds, $n$ is the number of gates in the circuit.}
	\label{alg:circuit-ds}
	\SetKwBlock{Interface}{Interface Circuit}{end}
	\Interface{
		\kw{type circuit}
		\BlankLine
		\tcp{Create a circuit from a gate array.}
		\tcp{Cost: $O(n)$ work and $O(\lg{n})$ span}
		\textbf{def} create(g: gate array): circuit\;
		\BlankLine
		\tcp{Return the number of gates, excluding tombstones, before an index $i$.}
		\tcp{Cost: $O(\lg{n})$ work and span}
		\textbf{def} before(c: circuit, i: int): int\;
		\BlankLine
		\tcp{Return the $i^{th}$ gate, excluding all tombstones.}
		\tcp{Cost: $O(\lg{n})$ work and span}
		\textbf{def} get(c: circuit, i: int): gate\;
		\BlankLine
		\tcp{Replace each gate at the specified index with the specified gate.}
		\tcp{Use a tombstone to indicate a removed gate.}
		\tcp{Cost: For $l$ gates, $O(l \lg{n})$ work and $O(\lg{n})$ span}
		\textbf{def} substitute(c: circuit, g: (int$\times$gate) array): void\;
		\BlankLine
		\tcp{Return the gate array of the circuit, excluding all tombstones.}
		\tcp{Cost: $O(n)$ work and $O(\lg{n})$ span}
		\textbf{def} gates(c: circuit): gate array\;
	}
\end{algorithm}
Algorithm~\ref{alg:circuit-ds} shows the interface for our circuit data
structure along with the cost bounds for each operation.
To support the operations with the given bounds, we use an array to store the
gates.
This enables constant-time access to each gate given its index in the array.
To remove a gate (e.g., during optimization), we replace it with a
``tombstone'', which indicates an absent gate.

Our algorithm operates by optimizing circuit \defn{segments} using the oracle
optimizer. A segment is defined as a contiguous sequence of non-tombstone gates
between indices $i$ and $j$ in the circuit's gate array. We call a segment an
\defn{$\Omega$-segment} if it contains $\Omega$ gates. These segments form the
basic units of optimization in our approach.
For this to work efficiently, the algorithm must disregard the tombstones when
partitioning the circuit into segments, which are then optimized independently.
This can be challenging because as the optimization proceeds, the number of
tombstones increases.
We therefore augment the array with a binary tree data structure, which we call
the \defn{index tree}, that helps locate the gates efficiently.

The index tree takes the form of a complete binary tree, where each leaf of the
tree corresponds to a gate in the circuit and is labeled with a weight of $1$
if there is a gate at that position, and $0$ if there is a tombstone.
We tag each internal node with a weight, which is equal to the sum of the
weights of its children, indicating the number of gates in the subtree
rooted at that node.
Figure~\ref{fig:ftree} shows an example index tree
for a circuit with 5 gates.
Initially, as shown in Figure~\ref{fig:ftree}a, the index tree has 5 leaves,
all with a weight of $1$, representing the state before any optimization.
We realize that we can optimize the circuit by removing the two \texttt{X}
gates separated by a \texttt{CNOT} gate and replacing them with tombstones.
To update the index tree, we change the weights of the leaves corresponding to
the removed \texttt{X} gates to $0$ and update all the weights to reflect this
change.
Figure~\ref{fig:ftree}b shows the index tree after optimization.

\begin{figure}[t]
	\centering
	\includegraphics[width=\columnwidth]{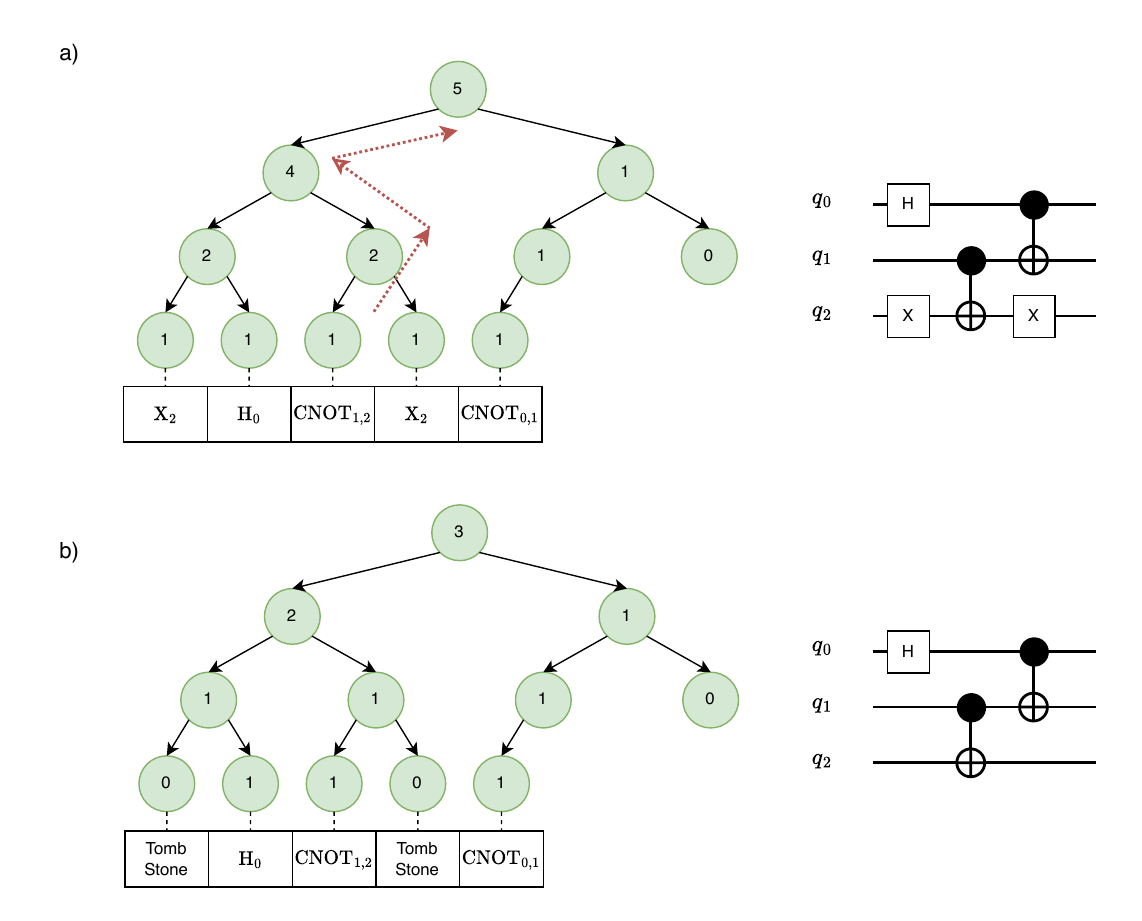}
	\caption{The index tree data structure. a) A circuit with 5 gates and a corresponding index tree. b) A circuit after optimization with 3 gates remaining and the updated index tree.}
	\label{fig:ftree}
\end{figure}

To \textbf{create} a circuit from a given array of gates, we construct the
index tree and tag each internal node layer by layer; this requires linear work
and logarithmic span.

For the \textbf{before} operation, we start at the leaf corresponding to the
specified index and walk up the index tree to the root, summing the weights
of left siblings along the path. Figure~\ref{fig:ftree}a illustrates this process with an example that finds the number of non-tombstone gates before $\texttt{CNOT}_{1,2}$. We trace the unique path from the leaf corresponding to $\texttt{CNOT}_{1,2}$ to the root (shown as the red path). Along this path, the first node has weight $1$ with no left sibling, while the second node has weight $2$ with a left sibling of weight $2$. Since only the second node contributes a left sibling weight, we conclude that there are two non-tombstone gates before $\texttt{CNOT}_{1,2}$.

Because the index tree is balanced, this requires $O(\lg{n})$ work and span.
The \textbf{get} operation takes an integer $i$ and fetches the $i^{th}$ gate,
ignoring all tombstones.
This can be implemented in logarithmic work and span by starting at the root of
the index tree and tracing a path down to a leaf.
%
The \textbf{substitute} operation takes an array of index-gate pairs,
substitutes the gates at the specified indices with the specified gates, and
updates the index tree.
The \textbf{gates} operation returns the gate array of the circuit, excluding
all tombstones.

The index tree data structure naturally generalizes to the layered
representation of circuits (discussed in Section~\ref{sec:bg:circuits}): we
think of each layer as a ``big'' gate and perform all operations at the
granularity of layers accordingly.
For our results, we primarily use the gate sequence representation, but we also
use the layered representation for an additional experiment on optimizing depth
(Section~\ref{sec:exp::quartz}).

\section{Algorithm}
\label{sec:algorithm}

We present the \name{} (Parallel Optimizer for Quantum Circuits) algorithm in
this section.
At a high level, the algorithm works by tracking a set of \defn{fingers} to be
optimized and optimizing in parallel the segments around the fingers until no
further optimization is possible. The fingers are a set of indices that track
the indices of the circuit near which further optimization is needed.

We say that two fingers are \defn{non-interfering} if there are at least
$2\Omega$ gates between them. This property allows us to optimize the segments around these fingers in parallel without conflicts.


\begin{algorithm}[ht]
	\SetAlgoLined
	\caption{\name{} Algorithm for parallel optimization of quantum
		circuits.  The algorithm takes as input 1) an oracle optimizer denoted by ``$\mathrm{oracle}$'', 2) a gate array $\mathcal{A}$, and 3) a segment size
		$\Omega$, and outputs an optimized circuit.}
	\label{alg:poac}
	\SetKwBlock{Fn}{def POPQC($\mathrm{oracle}$: \(\textbf{gate array} \rightarrow \textbf{gate array}\),
		$\mathcal{A}$: \textbf{gate array},
		$\Omega$: \textbf{int}) $:$ $\textbf{gate array}$}{end}
	\Fn{
		\tcp{Initialize fingers}
		\(\mathcal{F} \gets \{\,0,\;\Omega,\;2 \cdot \Omega,\;\dots,\;\lfloor \frac{\abs{\mathcal{A}}}{\Omega} \rfloor \cdot \Omega \,\}\)

		\tcp{Create circuit}
		\(\mathcal{C} \gets \mathrm{Circuit}\textrm{.create}(\mathcal{A})\)

		\tcp{Optimize in rounds}
		\While{$\mathcal{F} \neq \emptyset$}{
			\(\mathcal{F} \gets \mathrm{optimizeSegments}(\mathcal{C}, \, \mathcal{F},\, \mathrm{oracle},\, \Omega)\)
		}

		\tcp{Return the gates of the optimized circuit}
		\Return{$\mathrm{Circuit.gates}(\mathcal{C})$}
	}
\end{algorithm}

\begin{algorithm}[ht]
	\SetAlgoLined
	\SetKwFor{ParMap}{parmap}{}{end}
	\caption{\textbf{optimizeSegments} \textbf{Algorithm}}
	\label{alg:optimize_segments}

	\SetKwBlock{Fn}{def optimizeSegments($\mathcal{C}$: circuit, $\mathcal{F}$: $\textbf{int array}$, $\mathrm{oracle}$: \(\textbf{gate array} \rightarrow \textbf{gate array}\), $\Omega$: \textbf{int}) $:$ $\textbf{int array}$}{end}
	\Fn{
		\tcp{Select non-interfering fingers to optimize}
		$\mathcal{F}_{selected},\, \mathcal{F}_{remaining} \gets \mathrm{selectFingers}(\mathcal{F},\, \mathcal{C},\, \Omega)$

		\tcp{Optimize segments around selected fingers independently}
		$(\mathcal{F}_{new},\, \mathcal{C}_{updates}) \gets$ \ParMap{\(f \in \mathcal{F}_{selected}\)}{
			$segment \gets \{\mathrm{Circuit.get}(\mathcal{C},\, \mathrm{Circuit.before}(\mathcal{C},\, f) + i) \;\mathrm{for}\; i \in [- \Omega, \Omega]\}$

			$optSegment \gets \mathrm{oracle}(segment)$

			\uIf{$\abs{{optSegment}} < \abs{{segment}}$}{
				\tcp{Update fingers}
				$\mathcal{F}_{new} \gets  \{\mathrm{Circuit.before}(\mathcal{C},\, f) - \Omega, \mathrm{Circuit.before}(\mathcal{C},\, f) + \Omega\}$

				$optSegment \gets \mathrm{padWithTombstone}(optSegment, |segment|)$

				\tcp{Collect updates to circuit}
				$\mathcal{C}_{updates} \gets\{(\mathrm{Circuit.before}(\mathcal{C},\, f) + i,\; optSegment[i+ \Omega])\;\mathrm{for}\; i \in [-\Omega, \Omega]\}$

				\Return{$(\mathcal{F}_{new},\, \mathcal{C}_{updates})$}
			}
			\Else{
				\Return{$(\emptyset, \emptyset)$}
			}
		}
		$\mathcal{F}_{new} \gets \bigcup_{f \in \mathcal{F}_{new}} f$

		$\mathcal{C}_{updates} \gets \bigcup_{c \in \mathcal{C}_{updates}} c$

		\tcp{Apply updates to circuit}
		$\mathrm{Circuit.substitute}(\mathcal{C},\, \mathcal{C}_{updates})$

		\tcp{Merge two sorted lists, removing duplicate items and maintaining sorted order}
		$\mathcal{F} \gets \mathrm{mergeAndDeduplicate}(\mathcal{F}_{remaining},\, \mathcal{F}_{new})$

		\Return{$\mathcal{F}$
		}
	}
\end{algorithm}

\begin{algorithm}[ht]
	\SetAlgoLined
	\caption{\textbf{selectFingers} \textbf{Algorithm}}
	\label{alg:partition_fingers}
	\SetKwFor{ParMap}{parmap}{}{end}
	\SetKwBlock{Fn}{def selectFingers($\mathcal{F}$: $\textbf{int array}$, $\mathcal{C}$: circuit, $\Omega$: \textbf{int}) : (\textbf{int array}, \textbf{int array})}{end}
	\Fn{
		$(\mathcal{F}_{even},\, \mathcal{F}_{odd}) \gets$ \ParMap{\(i \in \{0,\cdots,\abs{\mathcal{F}}\}\)}{
			$groupIndex \gets \lfloor \mathrm{Circuit.before}(\mathcal{C},\, \mathcal{F}[i]) / 2\Omega\rfloor$

			$groupIndexPrev \gets \textbf{if}\; i > 0 \;\textbf{then}\; \lfloor \mathrm{Circuit.before}(\mathcal{C},\, \mathcal{F}[i-1]) / 2\Omega\rfloor \;\textbf{else}\; -1$

			\tcp{Determine if $f$ is the first finger in its group by comparing its group index and the group index of the previous finger}

			\uIf{$groupIndex > groupIndexPrev $}{

				\uIf{$groupIndex \bmod 2 = 0$}{
					\Return{$(\{i\}, \emptyset)$}
				}
				\Else{
					\Return{$(\emptyset, \{i\})$}
				}
			}
			\Else{
				\Return{$(\emptyset, \emptyset)$}
			}

		}

		$\mathcal{F}_{even} \gets \bigcup_{f \in \mathcal{F}_{even}} f$

		$\mathcal{F}_{odd} \gets \bigcup_{f \in \mathcal{F}_{odd}} f$

		\tcp{Return the larger of the two sets}

		\uIf{$\abs{\mathcal{F}_{even}} > \abs{\mathcal{F}_{odd}}$}{
			\Return{$ \mathcal{F}_{even},\mathcal{F} \setminus \mathcal{F}_{even}$}
		}
		\Else{
			\Return{$ \mathcal{F}_{odd},\mathcal{F} \setminus \mathcal{F}_{odd}$}
		}
	}
\end{algorithm}

Algorithm~\ref{alg:poac} shows the pseudo-code for the \name{} algorithm.
The algorithm is parameterized by a \emph{local} oracle optimizer that can
optimize a small circuit segment.
We represent the oracle as a function and make no assumptions about its inner
workings.
In addition to the oracle, the algorithm takes as input the gate array
$\mathcal{A}$ representing the circuit to optimize, and a segment size
$\Omega$.

The algorithm maintains a set of fingers ($\mathcal{F}$) and optimizes segments
around them.
More precisely, given a finger at some position in the circuit, the algorithm
presumes that
any $\Omega$-segment \defn{containing} the finger needs to be optimized.
We say that a segment from index $i$ to $j$ contains a finger at $f$, if $i \leq
	f < j$.

%
%

To optimize the input circuit, the algorithm proceeds in rounds. Each round of
optimization uses the \textbf{optimizeSegments} algorithm, which
starts by partitioning the fingers into two sets: $\mathcal{F}_{selected}$,
which contains non-interfering fingers that will be optimized, and
$\mathcal{F}_{remaining}$, which contains the remaining fingers that will not
be optimized in the current round.
This is done using the \textbf{selectFingers} algorithm
(Algorithm~\ref{alg:partition_fingers}).
Then, for each finger $f \in \mathcal{F}_{selected}$,
the \textbf{optimizeSegments} algorithm finds a $2\Omega$-segment centered at
$f$ using the index tree and optimizes the segment using the oracle.
Because the selected fingers are non-interfering, the algorithm can optimize
around each finger in parallel.
As the algorithm optimizes the segments around each selected finger, it creates
a new set of fingers $\mathcal{F}_{new}$ around the segments that are optimized
by the oracle.
Specifically, if the oracle does not optimize the segment (determined by
comparing the size or the gate count of the optimized segment to the
unoptimized one, denoted as $\abs{segment}$), then the algorithm removes the
finger.
Otherwise, it adds new fingers at the boundaries of the optimized segment. As
we will show in Lemma~\ref{lem:optimize_fingers_correctness}, the
\textbf{optimizeSegments} algorithm preserves the invariant that each
unoptimized $\Omega$-segment contains a finger.
%

The \textbf{selectFingers} algorithm partitions the set of fingers into two
sets, where
the first set contains non-interfering fingers and the second set contains the
remaining fingers, while trying to maximize the size of the first set.
To do so, the \textbf{selectFingers} algorithm first partitions the circuit
into groups of $2\Omega$ gates each, except possibly for the last group.
It then selects the first finger from each even-numbered group to construct the
set $\mathcal{F}_{even}$.
Similarly, the algorithm selects the first finger from each odd-numbered group
to construct the set $\mathcal{F}_{odd}$.
%
%
The algorithm then selects the larger of $\mathcal{F}_{odd}$ and
$\mathcal{F}_{even}$, and returns the corresponding partition of the fingers to
ensure maximum progress. Actually, it is guaranteed that this
\textbf{selectFingers} algorithm selects a constant fraction of the fingers as
we will show in Lemma~\ref{lem:efficiency_of_partition_fingers}.

\begin{figure}[t]
	\centering
	\includegraphics[width=\columnwidth]{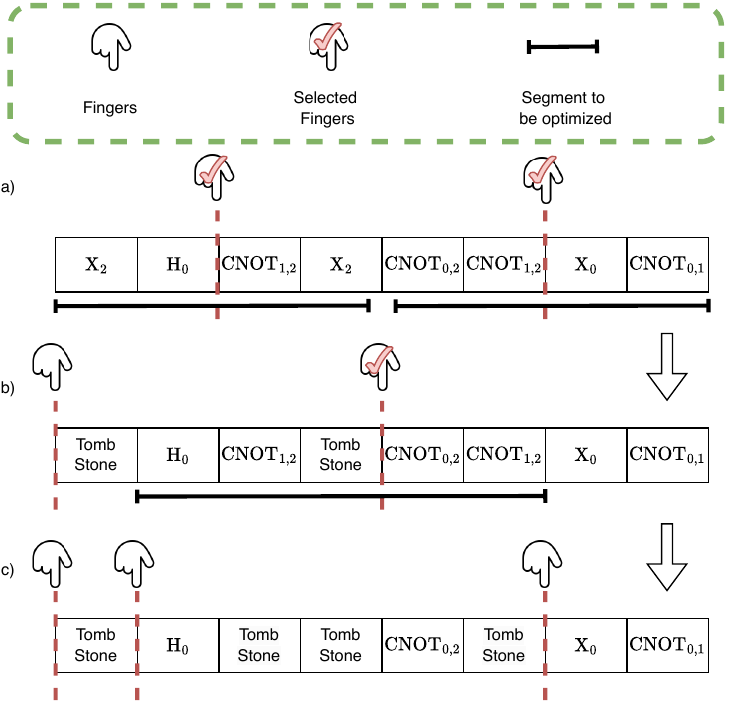}
	\caption{An illustration of a run of the \name{} algorithm. We
		assume $\Omega =2$, and thus the segments being optimized
		consist of $2\Omega = 4$ gates.}
	\label{fig:example}
\end{figure}

Figure~\ref{fig:example} illustrates an example run of the \name{} algorithm.
As shown in Figure~\ref{fig:example}a, we initially have two fingers at indices
2 and 6.
These two fingers are non-interfering, and thus, in the first round, the
\textbf{optimizeSegments} algorithm optimizes the segments centered at these
fingers in parallel.
This optimization removes the two \texttt{X} gates from the left segment, 
while the right segment remains unaffected (no optimizations are needed).
The algorithm then removes the finger within the second, unaffected segment and
adds the new fingers at the boundaries of the first, optimized segment, as
shown in Figure~\ref{fig:example}b.
In the second round, the two fingers interfere, so only one of them is selected
for optimization.
Assuming the right finger is selected, the algorithm optimizes the
segment containing it, which deletes two \texttt{CNOT} gates as shown in
Figure~\ref{fig:example}c.
This algorithm continues optimization in rounds until no more fingers remain.

\section{Efficiency Analysis}
In this section, we analyze the termination and efficiency of the \name{}
algorithm. Our analysis relies on two key lemmas that characterize the
efficiency of finger selection and the total number of oracle calls required.

\begin{lemma}[Efficiency of \textbf{selectFingers}]
	\label{lem:efficiency_of_partition_fingers}
	The \textbf{selectFingers} algorithm has work $O(\abs{\mathcal{F}}\lg{n})$ and span $O(\lg{n})$.  Furthermore, the returned set of non-interfering fingers contains at least a $\frac{1}{4\Omega}$ fraction of all fingers.
\end{lemma}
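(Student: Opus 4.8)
The plan is to split the claim into its two halves: the cost bounds (work $O(|\mathcal{F}|\lg n)$, span $O(\lg n)$) and the selection guarantee (at least a $\frac{1}{4\Omega}$ fraction of fingers is returned). For the cost bounds, I would walk through Algorithm~\ref{alg:partition_fingers} line by line. The body of the \textbf{parmap} over the $|\mathcal{F}|$ indices does a constant number of $\mathrm{Circuit.before}$ calls, each $O(\lg n)$ work and span by the interface in Algorithm~\ref{alg:circuit-ds}, plus $O(1)$ arithmetic, so each iteration costs $O(\lg n)$ work and span; summing over iterations gives $O(|\mathcal{F}|\lg n)$ work, and the parmap adds only an $O(\lg|\mathcal{F}|) = O(\lg n)$ term to the span, leaving span $O(\lg n)$. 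The subsequent set unions $\bigcup$ over the per-iteration singletons/empties are really a parallel filter-and-compact over $|\mathcal{F}|$ elements, which is $O(|\mathcal{F}|)$ work and $O(\lg|\mathcal{F}|)$ span, and the final size comparison and set difference $\mathcal{F}\setminus\mathcal{F}_{even}$ are $O(|\mathcal{F}|)$ work, $O(\lg n)$ span (the fingers are kept sorted, so this is a merge-style operation). None of these dominate, so the stated bounds follow.

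For the combinatorial guarantee, the key observation is that \textbf{selectFingers} partitions the $n$ (live) gate positions into consecutive blocks of $2\Omega$ gates — the groups — and from each nonempty group picks exactly one representative finger (the first one in that group, detected by the $groupIndex > groupIndexPrev$ test, which fires exactly once per occupied group since the fingers are scanned in sorted order). Let $k$ be the number of occupied groups; then $|\mathcal{F}_{even}| + |\mathcal{F}_{odd}| = k$, since every occupied group contributes its representative to exactly one of the two sets according to parity. Hence $\max(|\mathcal{F}_{even}|, |\mathcal{F}_{odd}|) \ge k/2$, and the algorithm returns this larger set, which is non-interfering because its fingers lie in distinct groups of the same parity, so any two of them are at least $2\Omega$ live gates apart (there is at least one entire group of $2\Omega$ gates strictly between them). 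It remains to lower-bound $k$ in terms of $|\mathcal{F}|$: I would argue that each group, having only $2\Omega$ gate positions, can contain at most $2\Omega$ fingers (fingers are distinct indices into the gate array), so $|\mathcal{F}| \le 2\Omega \cdot k$, giving $k \ge |\mathcal{F}|/(2\Omega)$ and therefore $\max(|\mathcal{F}_{even}|,|\mathcal{F}_{odd}|) \ge k/2 \ge |\mathcal{F}|/(4\Omega)$, as required.

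The step I expect to need the most care is the claim that each group holds at most $2\Omega$ fingers and, relatedly, that "group" is the right notion of the partition — i.e., that the $\mathrm{before}$-based group index genuinely partitions fingers by which block of $2\Omega$ \emph{live} gates they fall into, consistently across all iterations. One has to be slightly careful about fingers that sit on the same live index (or on tombstones between live gates) and about the last, possibly short, group; I would note that distinctness of finger indices bounds the count per group regardless, and that a short final group only helps. A secondary subtlety is confirming non-interference precisely against the paper's definition (at least $2\Omega$ gates between the fingers): two selected fingers come from groups of equal parity, hence differ by group index at least $2$, so the full group of $2\Omega$ live gates lying strictly between their groups witnesses the required separation. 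With those two points pinned down, the two halves of the lemma combine immediately.
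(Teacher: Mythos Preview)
Your proposal is correct and follows essentially the same approach as the paper: the cost bound comes from the two $\mathrm{before}$ calls per finger inside the \textbf{parmap}, and the selection guarantee comes from observing that each $2\Omega$-group holds at most $2\Omega$ fingers so that $|\mathcal{F}_{even}|+|\mathcal{F}_{odd}|\ge |\mathcal{F}|/(2\Omega)$, whence the max is at least $|\mathcal{F}|/(4\Omega)$. Your write-up is simply more detailed (you also treat the post-\textbf{parmap} unions and the non-interference check, the latter of which the paper defers to a separate correctness lemma), but the core argument is the same.
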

\begin{proof}
	The \textbf{selectFingers} algorithm processes all fingers in parallel. For each finger $f \in \mathcal{F}$, the algorithm performs two \textbf{before} operations involving the index tree, which take $O(\lg{n})$ time. Therefore, the work is $O(\abs{\mathcal{F}}\lg{n})$ and the span is $O(\lg{n})$.

	To prove that the number of selected fingers is a constant fraction of all the
	fingers, note that
	\[
		\abs{\mathcal{F}_{even}} +
		\abs{\mathcal{F}_{odd}} \geq \frac{\abs{\mathcal{F}}}{2\Omega},
	\]
	because each group has at most $2\Omega$ fingers and for each group with at
	least one finger, one of the fingers is selected.
	Thus, we have $\max(\abs{\mathcal{F}_{even}}, \abs{\mathcal{F}_{odd}}) \geq
		\frac{\abs{\mathcal{F}}}{4\Omega}$, which means at least a $\frac{1}{4\Omega}$
	fraction of all fingers are selected.
\end{proof}

\begin{lemma}[Oracle Call Efficiency]
	\label{lem:oracle_call_efficiency}
	The \name{} algorithm requires $O(n)$ oracle calls.
\end{lemma}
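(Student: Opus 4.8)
The plan is to bound the total number of oracle calls by a potential-function argument. The key insight is that each oracle call either removes a finger permanently (when the oracle fails to shrink the segment) or strictly decreases the gate count of the circuit (when it succeeds). I would define a potential $\Phi = c_1 \cdot |\mathcal{C}| + c_2 \cdot |\mathcal{F}|$ for suitable constants, where $|\mathcal{C}|$ is the current number of non-tombstone gates and $|\mathcal{F}|$ is the current number of fingers, and argue that each oracle call decreases $\Phi$ by at least a constant while only a bounded amount of potential is ever created.

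The main steps are as follows. First I would observe that an oracle call in \textbf{optimizeSegments} that does \emph{not} shrink the segment (the \texttt{Else} branch) removes one finger from $\mathcal{F}$ and creates none, so it decreases $|\mathcal{F}|$ by one and cannot be charged more than once. Second, for a successful oracle call that shrinks a $2\Omega$-segment, the gate count $|\mathcal{C}|$ drops by at least one, while the finger it consumes is replaced by at most two new fingers at the segment boundaries; thus each such call creates at most two units of finger-potential but consumes one unit of gate-potential. Since $|\mathcal{C}|$ starts at $n$ and only decreases, there are at most $n$ successful calls, and these inject at most $2n$ new fingers over the entire run. Third, combining with the $\lfloor n/\Omega\rfloor + 1 \le n+1$ initial fingers, the total number of fingers ever created is $O(n)$, and since every unsuccessful call destroys a distinct finger, the number of unsuccessful calls is $O(n)$ as well. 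Summing the two kinds of calls gives $O(n)$ oracle calls total.

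The step I expect to require the most care is verifying that a successful oracle call genuinely decreases the gate count by at least one \emph{and} that no other mechanism (e.g., the \textbf{substitute}/\textbf{padWithTombstone} bookkeeping, or overlapping segments processed in the same parallel round) can resurrect gates or create fingers beyond the two boundary fingers counted above. Here I would lean on the non-interference guarantee: the selected fingers in a round are pairwise at least $2\Omega$ gates apart, so the $2\Omega$-segments optimized in one \textbf{parmap} are disjoint, their updates do not conflict, and the accounting above applies independently to each. I would also note that \texttt{padWithTombstone} only introduces tombstones (never new live gates), so it cannot increase $|\mathcal{C}|$. With disjointness established, the per-call bookkeeping is routine and the potential argument closes.
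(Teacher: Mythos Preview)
Your proposal is correct and follows essentially the same approach as the paper: the paper defines the potential $L = |\mathcal{F}| + 2|\mathcal{C}|$ (i.e., your $\Phi$ with $c_1=2$, $c_2=1$), checks the same two cases (unsuccessful call drops $|\mathcal{F}|$ by one; successful call drops $|\mathcal{C}|$ by at least one while increasing $|\mathcal{F}|$ by one), and concludes from the initial value $L = \lceil n/\Omega\rceil + 2n$. Your additional remarks about non-interference and \texttt{padWithTombstone} are more careful than the paper's own proof, which leaves those points implicit.
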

\begin{proof}
	We prove this lemma by defining a potential function $L = \abs{\mathcal{F}} + 2\abs{\mathcal{C}}$, then showing that it is bounded by $O(n)$ and decreases by at least $1$ for each oracle call.
	
	For each oracle call, there are two cases:
	\begin{itemize}
			      \item[a)] The oracle does not optimize the segment. Only the selected finger is
			            removed, and $\abs{\mathcal{F}}$ decreases by $1$.
			      \item[b)] The oracle makes changes to the circuit. This means that the size of the
			            optimized circuit is reduced by at least $1$, so $\abs{\mathcal{C}}$ decreases
			            by at least $1$. The original finger is removed and two new fingers are added,
			            so $\abs{\mathcal{F}}$ increases by $1$.
		\end{itemize}
		Therefore, $L$ decreases by at least $1$ for each oracle call in either case.
		Initially, we have $L = \ceil{\frac{n}{\Omega}} + 2n$.
		Hence, the number of oracle calls is $O(n)$.
\end{proof}

We have the following lemma by combining the two lemmas above.

\begin{lemma}[Number of Fingers]
	\label{lem:fingers_summation}
	Let $\mathcal{F}^i$ be the set of fingers in the $i$-th round. The total number of fingers across all rounds, $\sum_i \abs{\mathcal{F}^i}$, is $O(\Omega n)$.
\end{lemma}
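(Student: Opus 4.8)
The plan is to bound the per-round finger count and then sum over rounds, using the two preceding lemmas as the engine. Concretely, I would first establish that $\abs{\mathcal{F}^i}$ cannot be too large at any round: the only way fingers are created is by case (b) of Lemma~\ref{lem:oracle_call_efficiency}, where each oracle call that optimizes a segment removes one finger and adds two, for a net gain of one finger, while also decreasing $\abs{\mathcal{C}}$. Since Lemma~\ref{lem:oracle_call_efficiency} already tells us there are $O(n)$ oracle calls in total, the total number of fingers ever \emph{created} is $O(n) + \ceil{n/\Omega}$ (the initial fingers plus at most one new net finger per oracle call). That is only an $O(n)$ bound, though, not $O(\Omega n)$ — so the subtlety is that a single finger, once created, may persist unoptimized across many rounds and be counted in $\abs{\mathcal{F}^i}$ for each such round.

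So the heart of the argument is to bound how many rounds any given finger can survive. Here I would invoke Lemma~\ref{lem:efficiency_of_partition_fingers}: in each round, \textbf{selectFingers} picks at least a $\frac{1}{4\Omega}$ fraction of the current finger set to be optimized, and each selected finger is then removed (it either gets optimized away, or spawns two new fingers at shifted positions and is itself removed). The cleanest way to turn this into a bound is a potential/amortization argument on $\sum_i \abs{\mathcal{F}^i}$ directly. Let me think of it as: charge each unit of $\abs{\mathcal{F}^i}$ to the finger it represents. A finger contributes $1$ to the sum for every round it is alive. Since each round removes a $\frac{1}{4\Omega}$ fraction of the live fingers, the number of live fingers decays geometrically \emph{relative to the inflow}; more carefully, I would argue that each finger that is ever created is ``alive'' for the rounds between its creation and its selection, and that the expected—no, deterministic—number of rounds a finger is alive, summed appropriately, telescopes.

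The cleanest realization: define $\Phi^i = \abs{\mathcal{F}^i}$. In round $i$, at least $\frac{1}{4\Omega}\abs{\mathcal{F}^i}$ fingers are selected and removed, and at most (number of oracle calls in round $i$) $\le \abs{\mathcal{F}^i}$ new fingers are added, but with the global cap that the \emph{total} number of additions across all rounds is $O(n)$ (from Lemma~\ref{lem:oracle_call_efficiency}, since additions only happen on optimizing oracle calls and each such call decreases $2\abs{\mathcal{C}}$ by at least $2$). Summing the recurrence $\abs{\mathcal{F}^{i+1}} \le \abs{\mathcal{F}^i} - \frac{1}{4\Omega}\abs{\mathcal{F}^i} + a_i$ where $\sum_i a_i = O(n)$: telescoping gives $\frac{1}{4\Omega}\sum_i \abs{\mathcal{F}^i} \le \abs{\mathcal{F}^0} + \sum_i a_i - \lim \abs{\mathcal{F}^i} = O(n)$, hence $\sum_i \abs{\mathcal{F}^i} = O(\Omega n)$. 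Rearranging $\sum_i (\abs{\mathcal{F}^i} - \abs{\mathcal{F}^{i+1}}) = \abs{\mathcal{F}^0} - \abs{\mathcal{F}^{\mathrm{final}}}$ combined with $\abs{\mathcal{F}^i} - \abs{\mathcal{F}^{i+1}} \ge \frac{1}{4\Omega}\abs{\mathcal{F}^i} - a_i$ is exactly the needed step.

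The main obstacle I anticipate is making the bookkeeping of "additions per round" rigorous and consistent with the $O(n)$ total-oracle-call bound — specifically, being careful that new fingers added in round $i$ are counted in $\abs{\mathcal{F}^{i+1}}$ not $\abs{\mathcal{F}^i}$, and that the geometric-decay inequality $\abs{\mathcal{F}^{i+1}} \le (1 - \frac{1}{4\Omega})\abs{\mathcal{F}^i} + a_i$ genuinely holds given that unselected fingers ($\mathcal{F}_{remaining}$) pass through unchanged and selected-but-unoptimized fingers are deleted. Once that recurrence is pinned down, the telescoping sum is routine. A secondary point to handle cleanly is that $\abs{\mathcal{F}^0} = \ceil{n/\Omega} = O(n/\Omega) \le O(n)$, which is dominated by the $O(n)$ inflow term, so it does not affect the final $O(\Omega n)$ conclusion.
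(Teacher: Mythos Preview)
Your proposal is correct and uses the same two ingredients as the paper (Lemma~\ref{lem:efficiency_of_partition_fingers} for the $\tfrac{1}{4\Omega}$ selection fraction, Lemma~\ref{lem:oracle_call_efficiency} for the $O(n)$ global budget), but you take a longer route than necessary. You set up a per-round recurrence $\abs{\mathcal{F}^{i+1}} \le (1-\tfrac{1}{4\Omega})\abs{\mathcal{F}^i} + a_i$, track the inflow $a_i$ of newly created fingers, bound $\sum_i a_i = O(n)$ separately via the circuit-size decrease, and then telescope. The paper instead observes that the number of selected fingers in round $i$ is \emph{exactly} the number of oracle calls in round $i$, so $\sum_i \abs{\mathcal{F}^i_{selected}} = O(n)$ directly by Lemma~\ref{lem:oracle_call_efficiency}; then summing the per-round inequality $\abs{\mathcal{F}^i} \le 4\Omega\,\abs{\mathcal{F}^i_{selected}}$ gives $\sum_i \abs{\mathcal{F}^i} \le 4\Omega \cdot O(n)$ in one line. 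Your recurrence and the bookkeeping about $a_i$, while valid, are subsumed by this direct identification and can be dropped entirely.
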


\begin{proof}
	From Lemma~\ref{lem:efficiency_of_partition_fingers}, we know that in the $i$-th round, the number of selected fingers satisfies $\abs{\mathcal{F}_{selected}^i} \geq \frac{\abs{\mathcal{F}^i}}{4\Omega}$.
	By Lemma~\ref{lem:oracle_call_efficiency}, the total number of oracle calls
	across all rounds is $O(n)$. Since each selected finger results in exactly one
	oracle call, we have $\sum_i \abs{\mathcal{F}_{selected}^i} = O(n)$. Therefore,
	$\sum_i \abs{\mathcal{F}^i} \leq 4\Omega \cdot \sum_i
		\abs{\mathcal{F}_{selected}^i} = 4\Omega \cdot O(n) = O(\Omega n)$.
\end{proof}

Finally, we prove the work and span bounds on the \name{} algorithm.

\begin{theorem}[Work and Span of \name{}]
	\label{thm:poac_efficiency}
	Suppose the upper bound of the work of an oracle call on a $2\Omega$-segment is $W$ and the span is $S$. The \name{} algorithm has almost linear work $O(n(\Omega\lg{n}+W))$ in total and logarithmic span $O(r(\lg{n}+S))$, where $r$ is the number of rounds (iterations of the outer loop) and $n$ is the size of the circuit.
\end{theorem}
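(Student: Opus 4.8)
The plan is to account separately for the non-oracle bookkeeping (finger selection, index-tree operations, segment extraction, and the merge/substitute steps) and the oracle calls themselves, then sum across all $r$ rounds using the earlier lemmas. For the work bound, I would argue as follows. In a single round, \textbf{selectFingers} costs $O(|\mathcal{F}^i|\lg n)$ work by Lemma~\ref{lem:efficiency_of_partition_fingers}. For each selected finger, \textbf{optimizeSegments} performs $O(\Omega)$ calls each to \textbf{before} and \textbf{get} (to extract the $2\Omega$-segment), one oracle call of work $\le W$, and then contributes $O(\Omega)$ index-gate pairs to $\mathcal{C}_{updates}$; each index-tree operation is $O(\lg n)$, so the per-finger non-oracle work is $O(\Omega\lg n)$ and the per-finger oracle work is $O(W)$. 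The \textbf{substitute} at the end of the round touches $O(\Omega\,|\mathcal{F}^i_{selected}|)$ positions at $O(\lg n)$ each, and \textbf{mergeAndDeduplicate} is linear in $|\mathcal{F}^i|$. Summing, round $i$ costs $O(|\mathcal{F}^i|\,(\Omega\lg n + W))$ work. Now sum over rounds: by Lemma~\ref{lem:fingers_summation}, $\sum_i |\mathcal{F}^i| = O(\Omega n)$, which would naively give $O(\Omega n(\Omega\lg n + W))$. To get the sharper $O(n(\Omega\lg n + W))$ claimed, I instead charge the per-round cost against selected fingers: since $|\mathcal{F}^i| \le 4\Omega\,|\mathcal{F}^i_{selected}|$, round $i$ costs $O(\Omega\,|\mathcal{F}^i_{selected}|\,\Omega\lg n + |\mathcal{F}^i_{selected}|\,W)$ — hmm, that still has an extra $\Omega$. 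The cleaner route is to observe that the total non-oracle work is dominated by the index-tree work, which is $O(\lg n)$ per \emph{gate position touched}, and across the whole algorithm the number of touched positions is $O(\Omega)$ times the number of oracle calls, i.e. $O(\Omega n)$ by Lemma~\ref{lem:oracle_call_efficiency}; together with $O(n)$ oracle calls each of work $W$ and the $\sum_i|\mathcal{F}^i|\lg n = O(\Omega n \lg n)$ selection cost, this gives total work $O(\Omega n \lg n + nW) = O(n(\Omega\lg n + W))$, plus the one-time \textbf{create} cost $O(n)$ and the final \textbf{gates} cost $O(n)$, which are absorbed.

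For the span bound, I would bound the span of a single round and multiply by $r$. Within one round: \textbf{selectFingers} has span $O(\lg n)$ (Lemma~\ref{lem:efficiency_of_partition_fingers}); the \textbf{parmap} over $\mathcal{F}_{selected}$ has per-iteration span $O(\Omega\lg n + S)$ — the $O(\Omega\lg n)$ from the sequential loop of index-tree operations building the segment and the updates, and $S$ from the oracle — plus an $O(\lg n)$ additive cost for the parallel-map construct itself (per the parallelism-model assumption), so the parmap has span $O(\Omega\lg n + S)$; the final \textbf{substitute} has span $O(\lg n)$ and \textbf{mergeAndDeduplicate} has span $O(\lg n)$ (or can be done with a parallel merge). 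Since $\Omega$ is treated as a constant here, $O(\Omega\lg n + S) = O(\lg n + S)$, so each round has span $O(\lg n + S)$, and across $r$ rounds the total span is $O(r(\lg n + S))$; the $O(\lg n)$ spans of \textbf{create} and \textbf{gates} are absorbed since $r \ge 1$.

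The main obstacle I anticipate is the work bound, specifically reconciling the $O(\Omega n)$ total-finger count from Lemma~\ref{lem:fingers_summation} with the claimed $O(n(\Omega\lg n + W))$ — a direct "cost per round $\times$ rounds" computation overshoots by a factor of $\Omega$ on the selection term unless one is careful. The resolution is to note that \textbf{selectFingers}'s $O(|\mathcal{F}^i|\lg n)$ cost, summed, is $O(\Omega n \lg n)$, which is exactly the $\Omega n\lg n$ term in the target bound, so no overshoot actually occurs once one keeps the $\Omega$ inside rather than pulling out a spurious second factor; the per-finger segment-processing work ($O(\Omega\lg n)$ per \emph{selected} finger, hence $O(\Omega\lg n)$ per oracle call, hence $O(\Omega n\lg n)$ total by Lemma~\ref{lem:oracle_call_efficiency}) is of the same order. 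I would also need to double-check that \textbf{padWithTombstone} and the set-union reductions ($\bigcup$) in \textbf{optimizeSegments} and \textbf{selectFingers} do not exceed these bounds — they are linear in the number of items being combined, which is $O(\Omega\,|\mathcal{F}^i_{selected}|)$ for the updates and $O(|\mathcal{F}^i|)$ for the fingers, both already accounted for.
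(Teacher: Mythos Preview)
Your work argument, once you settle on the ``cleaner route,'' is exactly the paper's decomposition: the $O(|\mathcal{F}^i|\lg n)$ cost of \textbf{selectFingers} sums to $O(\Omega n\lg n)$ via Lemma~\ref{lem:fingers_summation}, while the $O(\Omega\lg n + W)$ work attaches only to \emph{selected} fingers and hence sums to $O(n(\Omega\lg n + W))$ via Lemma~\ref{lem:oracle_call_efficiency}. Your initial overshoot came from writing the per-round cost as $O(|\mathcal{F}^i|(\Omega\lg n + W))$ instead of $O(|\mathcal{F}^i|\lg n + |\mathcal{F}^i_{selected}|(\Omega\lg n + W))$; you correctly diagnose and fix this in your last paragraph.

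Your span argument has a genuine gap. You bound the parmap body by $O(\Omega\lg n + S)$, treating the $2\Omega$ index-tree calls for segment extraction and update collection as sequential, and then write ``since $\Omega$ is treated as a constant here'' to collapse this to $O(\lg n + S)$. But the theorem as stated does \emph{not} assume $\Omega$ constant---the work bound explicitly carries $\Omega$, and the corollary treating $\Omega$ as constant comes only afterward. As written, your argument proves only span $O(r(\Omega\lg n + S))$. The paper instead treats the segment extraction and update-collection comprehensions as parallel, giving per-iteration span $O(\lg n)$ and $O(\lg\Omega)$ respectively, and then uses $\Omega < n$ (hence $\lg\Omega \le \lg n$) to obtain per-round span $O(\lg n + S)$ without any constant-$\Omega$ assumption. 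You need to parallelize those inner $[-\Omega,\Omega]$ loops rather than absorb $\Omega$ into the hidden constant.
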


\begin{proof}
	We analyze the work and span of the \textbf{optimizeSegments} function by examining each component:
	\begin{itemize}
		\item \textbf{selectFingers}: As established in Lemma~\ref{lem:efficiency_of_partition_fingers}, this has work $O(\abs{\mathcal{F}}\lg{n})$ and span $O(\lg{n})$.

		\item \textbf{Parallel optimization}: For each finger $f \in \mathcal{F}_{selected}$, we perform:
		      \begin{itemize}
			      \item Segment extraction: $O(\Omega\lg{n})$ work and $O(\lg{n})$ span
			      \item Oracle call: $O(W)$ work and $O(S)$ span
			      \item Collecting updates: $O(\Omega)$ work and $O(\lg{\Omega})$ span
		      \end{itemize}
		      This gives a total work of $O(\abs{\mathcal{F}_{selected}}(\Omega\lg{n} + W))$ and span $O(\lg{n} + S)$ for this phase. Here, we use the fact that $\Omega < n$.

		\item \textbf{Circuit substitution}: The \textbf{substitute} function processes input of length $O(\Omega \cdot \abs{\mathcal{F}_{selected}})$, requiring $O(\Omega \cdot \abs{\mathcal{F}_{selected}} \cdot \lg{n})$ work with $O(\lg{n})$ span.

		\item \textbf{Finger merging}: The \textbf{mergeAndDeduplicate} operation has work $O(\abs{\mathcal{F}})$ and span $O(\lg \abs{\mathcal{F}})$.
	\end{itemize}

	For each round, the total work is $O(\abs{\mathcal{F}}\lg{n} +
		\abs{\mathcal{F}_{selected}}(\Omega\lg{n} + W))$ and the span per round is
	$O(\lg{n} + S)$.

	By Lemma~\ref{lem:fingers_summation}, the sum of $\abs{\mathcal{F}}$ across all
	rounds is $O(\Omega n)$. By Lemma~\ref{lem:oracle_call_efficiency}, the sum of
	$\abs{\mathcal{F}_{selected}}$ across all rounds is $O(n)$. Therefore, the
	total work is $O(n(\Omega\lg{n} + W))$, and the span is $O(r(\lg{n} + S))$, where
	$r$ is the number of rounds. While $r$ could theoretically be as large as
	$\Theta(n)$ in the worst case, our empirical results in the next section
	demonstrate that $r$ is typically a small constant in practice.
\end{proof}

In practice, we choose $\Omega$ to be a constant factor that depends on the
oracle.
As we increase $\Omega$, the oracle and thus the run-time will increase, but
the quality may not increase due to natural locality of circuits.
The goal will be to find a setting for $\Omega$ that delivers good quality at
reasonable cost.

Assuming that $\Omega$ is constant, we can conclude that the work and span of
the oracle is also constant.
As a corollary to Theorem~\ref{thm:poac_efficiency}, we therefore conclude that
the work of the \name{} algorithm is $O(n\lg{n})$ and the span is $O(r\lg{n})$.

\section{Local Optimality of \name{}}
We define local optimality with respect to a given oracle function
``$\mathrm{oracle}$'' and segment size $\Omega$ as follows: for a circuit
represented by a gate array $\mathcal{A}$, we say $\mathcal{A}$ is
\defn{locally optimal} if for any $\Omega$-segment $\mathcal{A}'$,
applying the oracle optimizer does not reduce its size, i.e.,
$\abs{\mathcal{A}'} \leq \abs{\mathrm{oracle}(\mathcal{A}')}$.

For this definition to be meaningful, we require the oracle function to be well-behaved.
An oracle is \defn{well-behaved} if, after it has optimized a circuit, any segment of its output is optimal with respect to the oracle. Formally, if
$\mathcal{A}' = \mathrm{oracle}(\mathcal{A})$, then for any
segment $\mathcal{A}''$ of $\mathcal{A}'$, we have $\abs{\mathcal{A}''} \leq
	\abs{\mathrm{oracle}(\mathcal{A}'')}$.

This property ensures that when we call the oracle with a $2\Omega$-segment, the output is locally optimal with respect to the oracle and segment size $\Omega$.

Before proving the local optimality of \name{}, we first establish two correctness
lemmas for the \textbf{selectFingers} and \textbf{optimizeSegments}
algorithms.
\begin{lemma}[Correctness of \textbf{selectFingers}]
	The \textbf{selectFingers} algorithm returns a partition of the finger set such that the first part contains non-interfering fingers (for any two selected fingers, there are at least $2\Omega$ gates between them).
	\label{lem:partition_fingers_correctness}
\end{lemma}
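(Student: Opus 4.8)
The plan is to trace exactly which fingers end up in the returned first part and argue that any two of them are separated by at least $2\Omega$ non-tombstone gates. First I would recall the structure of the algorithm: it conceptually lays the circuit's non-tombstone gates into consecutive groups of $2\Omega$ gates, so that a finger $f$ belongs to group $\lfloor \mathrm{Circuit.before}(\mathcal{C}, f)/2\Omega \rfloor$. For each group that contains at least one finger, the algorithm marks the \emph{first} finger of that group (detected by comparing the group index of $\mathcal{F}[i]$ with that of its predecessor $\mathcal{F}[i-1]$, using the fact that $\mathcal{F}$ is sorted). These first-of-group fingers are then split by parity of their group index into $\mathcal{F}_{even}$ and $\mathcal{F}_{odd}$, and the algorithm returns the larger of the two as $\mathcal{F}_{selected}$.

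The key step is the separation bound. I would take two distinct selected fingers $f < f'$ from, say, $\mathcal{F}_{even}$ (the odd case is symmetric), lying in groups $G$ and $G'$ respectively. Since both are first-of-group fingers and $f \ne f'$, they lie in different groups, so $G < G'$; since both group indices are even, $G' \ge G + 2$. Now $f$ is in group $G$, so $\mathrm{Circuit.before}(\mathcal{C}, f) < (G+1)\cdot 2\Omega$, while $f'$ is in group $G' \ge G+2$, so $\mathrm{Circuit.before}(\mathcal{C}, f') \ge G' \cdot 2\Omega \ge (G+2)\cdot 2\Omega$. Hence the number of non-tombstone gates strictly between $f$ and $f'$, which is $\mathrm{Circuit.before}(\mathcal{C}, f') - \mathrm{Circuit.before}(\mathcal{C}, f)$, exceeds $(G+2)\cdot 2\Omega - (G+1)\cdot 2\Omega = 2\Omega$. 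Thus any two selected fingers have at least $2\Omega$ gates between them, which is exactly the non-interference condition. Finally, I would note that the second returned part is $\mathcal{F}\setminus\mathcal{F}_{selected}$, so together with $\mathcal{F}_{selected}$ it forms a partition of $\mathcal{F}$, completing the claim.

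I expect the only real subtlety — and the main thing to get right — is the bookkeeping around what ``$\mathrm{Circuit.before}$'' counts and how it interacts with the group boundaries: one must be careful that ``first finger of a group'' is well defined even when several fingers share the same $\mathrm{before}$ value (possible if tombstones sit between their raw indices), and that the parity jump of $2$ between distinct even (or odd) groups is what buys the full $2\Omega$ gap rather than only $\Omega$. Everything else — that processing is over a sorted array, that the predecessor comparison correctly identifies group leaders, and that the union operations reassemble the marked sets — is routine.
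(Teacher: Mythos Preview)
Your proposal is correct and follows essentially the same approach as the paper: partition the non-tombstone gates into $2\Omega$-wide groups, observe that each of $\mathcal{F}_{even}$ and $\mathcal{F}_{odd}$ contains at most one finger per group with an entire opposite-parity group sitting between any two of its fingers, and conclude the $2\Omega$ separation. Your explicit arithmetic with $\mathrm{Circuit.before}$ values makes precise what the paper states informally (``there is always an odd group between them''), but the underlying argument is identical.
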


\begin{proof}
	The algorithm first computes two groups of fingers, even-numbered
	($\mathcal{F}_{even}$) and odd-numbered ($\mathcal{F}_{odd}$).
	Because it selects only one finger from each group, any pair of fingers in
	$\mathcal{F}_{even}$ are from different groups and there is always an odd group
	between them.
	Therefore, any pair of fingers in $\mathcal{F}_{even}$ is separated by at least a $2\Omega$-segment, and is non-interfering.
	Similarly, any pair of fingers in $\mathcal{F}_{odd}$ is also non-interfering.
\end{proof}

We now prove the correctness of \textbf{optimizeSegments} by showing that it
preserves the invariant that every unoptimized $\Omega$-segment contains a finger.

\begin{lemma}[Correctness of \textbf{optimizeSegments}]
	\label{lem:optimize_fingers_correctness}
	For any well-behaved oracle, if the input circuit $\mathcal{C}$ and
	the fingers $\mathcal{F}$ satisfy the invariant that every unoptimized
	$\Omega$-segment contains a finger, then \textbf{optimizeSegments}
	returns an updated circuit $\mathcal{C}'$ and updated fingers
	$\mathcal{F}'$ that maintain this invariant: each unoptimized
	$\Omega$-segment in $\mathcal{C}'$ contains a finger from $\mathcal{F}'$. Besides, the updated fingers $\mathcal{F}'$ are sorted.
\end{lemma}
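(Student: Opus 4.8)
The plan is to track a single unoptimized $\Omega$-segment $\sigma$ in the output circuit $\mathcal{C}'$ and exhibit a finger of $\mathcal{F}'$ inside it, distinguishing cases by whether $\sigma$ overlaps a region that was rewritten during this round. First I would fix notation: let $\mathcal{F}_{selected}$ be the fingers chosen by \textbf{selectFingers}, which by Lemma~\ref{lem:partition_fingers_correctness} are pairwise non-interfering, so the $2\Omega$-segments centered at them (in terms of \textbf{before}-indices) are disjoint and each selected finger's optimization writes only within its own centered $2\Omega$-segment. Call these the \emph{active regions}. The circuit $\mathcal{C}'$ differs from $\mathcal{C}$ only inside active regions, where the oracle output (padded with tombstones) replaces the original $2\Omega$ gates; outside active regions the gates and their relative positions are unchanged. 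I would also note that $\mathcal{F}' = \mathcal{F}_{remaining} \cup \mathcal{F}_{new}$ up to deduplication and sorting, that $\mathcal{F}_{remaining} \subseteq \mathcal{F}$ (so any old finger not selected survives), and that $\mathcal{F}_{new}$ contains, for each selected finger $f$ whose segment was actually shrunk, the two boundary fingers $\mathrm{Circuit.before}(\mathcal{C},f)\pm\Omega$.

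Next, the case analysis. Case 1: $\sigma$ lies entirely outside all active regions. Then the $\Omega$ gates of $\sigma$ sit, with the same relative order, inside $\mathcal{C}$ as an unoptimized $\Omega$-segment; by the input invariant it contained some finger $g \in \mathcal{F}$. That $g$ cannot be a selected finger (selected fingers lie strictly inside their active regions, which are disjoint from $\sigma$), so $g \in \mathcal{F}_{remaining} \subseteq \mathcal{F}'$, and since the positions outside active regions are preserved, $g$ still falls inside $\sigma$. Case 2: $\sigma$ intersects some active region $R$ centered at selected finger $f$. Here I would use well-behavedness of the oracle: the padded oracle output placed in $R$, together with the fact that the oracle makes any segment of its output optimal, means that any $\Omega$-subsegment contained entirely within $R$ is already optimal and hence need not contain a finger. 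So if $\sigma$ is unoptimized it must poke out of $R$ on at least one side — i.e. $\sigma$ straddles a boundary of $R$ at \textbf{before}-index $\mathrm{Circuit.before}(\mathcal{C},f)-\Omega$ or $\mathrm{Circuit.before}(\mathcal{C},f)+\Omega$. Those are exactly the positions where $\mathcal{F}_{new}$ placed new fingers when the oracle shrank the segment; I would argue that a boundary can be straddled by an unoptimized $\Omega$-segment only when the oracle did shrink $R$ (if it did not, the contents of $R$ and its neighborhood are unchanged from $\mathcal{C}$, reducing to the input invariant), so the appropriate new finger is present in $\mathcal{F}'$ and lies inside $\sigma$. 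There is also the sub-case where $\sigma$ overlaps two active regions at once; since active regions are $2\Omega$ apart and $\sigma$ has only $\Omega$ non-tombstone gates, $\sigma$ can straddle at most one boundary, so this reduces to the single-boundary argument. Finally, sortedness of $\mathcal{F}'$ is immediate: $\mathcal{F}_{remaining}$ is sorted (a sub-sequence of the sorted input), $\mathcal{F}_{new}$ is a union of two-element sorted sets that I would observe are themselves in sorted order after the parallel map (the selected fingers are processed in increasing order), and \textbf{mergeAndDeduplicate} merges two sorted lists into a sorted list.

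The main obstacle I anticipate is the bookkeeping in Case 2 around how \textbf{before}-indices translate between $\mathcal{C}$ and $\mathcal{C}'$: the oracle's output is padded with tombstones to keep the array-index footprint of each active region fixed, but the \emph{logical} (tombstone-excluding) positions of gates to the right of a shrunk region shift, and the new fingers are recorded as array indices in $\mathcal{C}$, not $\mathcal{C}'$. I would need to check carefully that "$\sigma$ straddles the boundary at $\mathrm{Circuit.before}(\mathcal{C},f)\pm\Omega$" is the right statement in terms of $\mathcal{C}'$'s logical coordinates, and that the new finger, interpreted in $\mathcal{C}'$, indeed satisfies $i \le f' < j$ for $\sigma = [i,j)$. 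The disjointness of active regions (Lemma~\ref{lem:partition_fingers_correctness}) is what keeps these index shifts from interacting across regions, so the computation localizes to a single region plus its immediate neighborhood, which should make the argument manageable.
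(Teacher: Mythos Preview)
Your proposal is correct and follows essentially the same argument as the paper: well-behavedness of the oracle guarantees optimality of $\Omega$-segments entirely inside an optimized $2\Omega$-region, new boundary fingers cover segments that straddle a region boundary, and surviving fingers in $\mathcal{F}_{remaining}$ cover segments outside all active regions. The only organizational difference is that the paper iterates over selected fingers and argues locally that removing or replacing each one is safe, whereas you iterate over unoptimized $\Omega$-segments of $\mathcal{C}'$ and exhibit a finger in each; your version is somewhat more careful about the corner cases (segments touching two active regions, index translation between $\mathcal{C}$ and $\mathcal{C}'$) than the paper's proof, but the underlying ideas are the same.
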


\begin{proof}
	During the execution of \textbf{optimizeSegments}, for each selected finger $f\in\mathcal{F}_{selected}$, one of the following two cases occurs:
	\begin{enumerate}
		\item The oracle makes no changes to the $2\Omega$-length segment centered at $f$.
		      This implies that all $\Omega$-length subsegments within this region are locally
		      optimal by the well-behaved property of the oracle. The finger $f$ can be
		      safely removed.

		\item The oracle optimizes the $2\Omega$-length segment centered at $f$. After
		      optimization, any new $\Omega$-length segments fall into two categories: 1)
		      Segments fully contained within the optimized region, which are optimal by the
		      well-behaved property of the oracle. 2) Segments that cross the boundaries of
		      the optimized region, which contain the boundary fingers we placed at the start
		      and end of the optimized region. Therefore, all potentially unoptimized
		      $\Omega$-length segments are properly tracked with the updated fingers.
	\end{enumerate}
	Thus, the invariant is preserved: each unoptimized $\Omega$-segment in the updated circuit $\mathcal{C}'$ contains a finger from the updated set $\mathcal{F}'$. 
	
	We note that the set $\mathcal{F}_{new}$ is sorted because the selected fingers are non-interfering and $\mathcal{F}_{remaining}$ is also sorted. So a simple merge operation is enough to maintain the sorted property.
\end{proof}

Based on these lemmas, we now prove that the function \name{} is correct.
\begin{theorem}[Local Optimality of \name{}]
	\label{thm:name_correctness}
	The \name{} algorithm produces a locally optimal circuit with respect to a given oracle function $\mathrm{oracle}$ and segment size $\Omega$.
\end{theorem}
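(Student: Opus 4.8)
The plan is to prove the theorem by exhibiting a loop invariant for the main \texttt{while} loop of Algorithm~\ref{alg:poac} and then reading off local optimality from the state at loop exit; throughout we assume, as required for the definition to be meaningful, that the oracle is well-behaved. The invariant I would use is exactly the one threaded through the correctness lemmas: at the start of every iteration, every $\Omega$-segment of $\mathcal{C}$ that is \emph{not} optimal with respect to the oracle (i.e.\ whose oracle image is strictly smaller) contains a finger of $\mathcal{F}$.

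First I would check the base case. Immediately after $\mathcal{C} \gets \mathrm{Circuit.create}(\mathcal{A})$ there are no tombstones, so every $\Omega$-segment occupies a window of $\Omega$ consecutive indices $[i, i+\Omega)$. Any such window contains a multiple of $\Omega$, and a short calculation --- including the boundary case where the window abuts the end of the array, and the case where $\abs{\mathcal{A}}$ is itself a multiple of $\Omega$ --- shows this multiple lies in the initial set $\mathcal{F} = \{0, \Omega, \dots, \lfloor \abs{\mathcal{A}}/\Omega\rfloor\cdot\Omega\}$. Hence \emph{every} $\Omega$-segment, optimal or not, contains a finger, which is strictly stronger than the invariant.

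For the inductive step I would invoke Lemma~\ref{lem:optimize_fingers_correctness}: since the oracle is well-behaved and the invariant holds at the start of a round, after the call to \textbf{optimizeSegments} the updated circuit $\mathcal{C}'$ and finger set $\mathcal{F}'$ again satisfy it (and $\mathcal{F}'$ stays sorted). By induction the invariant holds at the start, and therefore at the exit, of the loop. To see that the loop actually terminates, note that whenever $\mathcal{F} \neq \emptyset$, Lemma~\ref{lem:efficiency_of_partition_fingers} guarantees \textbf{selectFingers} returns a nonempty $\mathcal{F}_{selected}$ (of size at least $\lceil \abs{\mathcal{F}}/(4\Omega)\rceil \geq 1$), so the round performs at least one oracle call; by Lemma~\ref{lem:oracle_call_efficiency} the total number of oracle calls is $O(n)$ and in particular finite, so the loop runs for finitely many rounds and exits with $\mathcal{F} = \emptyset$.

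Finally I would conclude. At loop exit, the invariant together with $\mathcal{F} = \emptyset$ forces $\mathcal{C}$ to have no non-optimal $\Omega$-segment, i.e.\ every $\Omega$-segment $\mathcal{A}'$ of $\mathcal{C}$ satisfies $\abs{\mathcal{A}'} \leq \abs{\mathrm{oracle}(\mathcal{A}')}$. Since $\mathrm{Circuit.gates}(\mathcal{C})$ returns precisely the sequence of non-tombstone gates of $\mathcal{C}$ in order, the $\Omega$-segments of the returned array are exactly the $\Omega$-segments of $\mathcal{C}$ (contiguous runs of $\Omega$ non-tombstone gates), so the returned circuit is locally optimal. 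I expect the main obstacle to be bookkeeping rather than depth: the delicate points are getting the base case right at the tail of the gate array, confirming that the loop cannot spin without decreasing the potential used in Lemma~\ref{lem:oracle_call_efficiency}, and being careful that ``unoptimized $\Omega$-segment'' in the invariant of Lemma~\ref{lem:optimize_fingers_correctness} means exactly ``$\Omega$-segment whose oracle image is strictly smaller'', so that $\mathcal{F} = \emptyset$ translates verbatim into the definition of local optimality. The genuinely substantive step --- that placing boundary fingers after each local optimization suffices to propagate optimality to the seams --- is already encapsulated in Lemma~\ref{lem:optimize_fingers_correctness}.
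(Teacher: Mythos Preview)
Your proposal is correct and follows essentially the same approach as the paper's own proof: establish the invariant at initialization, appeal to Lemma~\ref{lem:optimize_fingers_correctness} for preservation, use Lemmas~\ref{lem:efficiency_of_partition_fingers} and~\ref{lem:oracle_call_efficiency} for termination, and read off local optimality from $\mathcal{F}=\emptyset$. You add a bit more bookkeeping detail (the tail-of-array case, the identification of $\Omega$-segments of $\mathcal{C}$ with those of the returned array), but the structure and the lemmas invoked are the same.
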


\begin{proof}
	The algorithm starts by initializing the fingers $\mathcal{F} = [0,
		\Omega, 2\Omega, \cdots]$, placing a finger at the start of
	each $\Omega$-segment.
	This ensures our initial invariant holds: every unoptimized $\Omega$-segment
	contains a finger. We note that the choice of the initial fingers is not unique, but this specific choice minimizes the initial number of fingers.
	By Lemma~\ref{lem:optimize_fingers_correctness}, each call to
	\textbf{optimizeSegments} preserves the invariant that every unoptimized
	$\Omega$-segment contains a finger.
	Furthermore, by Lemma~\ref{lem:efficiency_of_partition_fingers}, in each round,
	at least one finger is selected, and by Lemma~\ref{lem:oracle_call_efficiency},
	the total number of oracle calls is bounded by $O(n)$. The algorithm
	terminates in at most $O(n)$ rounds.

	Since our invariant guarantees that every unoptimized $\Omega$-segment contains a finger, the absence of fingers implies that no unoptimized $\Omega$-segments
	remain in the circuit. Therefore, the final circuit is locally optimal with
	respect to the oracle and segment size $\Omega$.
\end{proof}

\section{Implementation and Evaluation}
\label{sec:evaluation}

Thus far in the paper, we have presented an algorithm that guarantees
$O(n\lg{n})$ work and $O(r\lg{n})$ span (assuming that $\Omega$ is a constant).
Is this algorithm practical and does it perform well in practice? Specifically,
can it take advantage of parallelism effectively to optimize large circuits?
In this section, we describe an implementation and present an evaluation that
answers these questions affirmatively.

\subsection{Implementation}
We implement the described algorithm in the Rust language, using the Rayon
library which provides fork-join parallelism.
~\footnote{We implemented an earlier version of the algorithm in Parallel ML~\cite{westrick+disent-2020,awa+space-2021,awa+ent-2023} but
later changed to Rust to make the implementation more accessible.}
Our implementation matches closely with the algorithm description.
%
%
%
In our implementation, we did not attempt to optimize manually the
overheads of parallelism (e.g., via granularity
control~data~\cite{acarchra11,acg-heartbeat-2018})
but instead left it to Rust+Rayon's adaptive loop-splitting strategy,
which attempts to avoid the overheads of parallelism when the loop
iterations are computationally insignificant.
As we discuss, this works well in most cases, except perhaps for the smallest
circuits, when parallelism overheads can be proportionally more significant.

In our implementation, we primarily use the \voqc{}\cite{hietala2021verified}
optimizer, because at the circuit sizes we consider,
it is the fastest optimizer and produces the best quality outputs within a
reasonable amount of time (e.g., 24 hours).
(See Section~\ref{sec:related} for a discussion of other optimizers.)
Our implementation calls the oracle via a system call and is therefore
relatively easy to adapt to use other optimizers as oracles.
As an additional oracle, we use the \quartz{}\cite{quartz-2022} optimizer,
which is significantly slower than \voqc{} but allows us to optimize other cost
metrics, such as the depth of the circuit.
We note that both oracle optimizers are fast for small to moderate size
circuits consisting of several thousand gates, but perform poorly on larger
circuits, which are necessary to realize the benefits of quantum computing.
%

%

For our experiments, we choose $\Omega=200$ because it provides a good balance
between speed and quality.
The results are not particularly sensitive to the exact setting of $\Omega$
within the range from $100$ to $800$ and in principle, $\Omega$ can be set to a circuit-dependent value.
\iffull 
(Section~\ref{sec:exp::omega}).
\else
(see extended version\cite{liu2025popqc} for details).
\fi

\subsection{Benchmarks}
To evaluate the performance and effectiveness of our techniques, we select
benchmarks from existing benchmarking suites, including
PennyLane\cite{bergholm2018pennylane}, Qiskit\cite{qiskit-2019}, and
NWQBench\cite{li2021qasmbench}.
The benchmarks include boolean satisfaction problems (\texttt{BoolSat}), the
Binary Welded Tree (\texttt{BWT}), Grover's searching algorithm
(\texttt{Grover})\cite{grover1996fast}, the HHL algorithm for solving linear
equations (\texttt{HHL})\cite{harrow2009quantum}, Shor's algorithm for
factoring integers (\texttt{Shor})\cite{shor1994algorithms}, square-root
algorithm (\texttt{Sqrt}), state vector preparation (\texttt{StateVec}), and
Variational Quantum Eigensolver (\texttt{VQE})\cite{peruzzo2014variational}.
We choose the specific benchmarks because their size scales rapidly with qubit
counts, and they remain challenging for state-of-the-art optimizers.
Our benchmarks and oracles all use a gate set consisting of Hadamard
(\texttt{H}), Pauli-X (\texttt{X}), controlled-not (\texttt{CNOT}), and
rotation-Z (\texttt{RZ}), which is the gate set used by \voqc{}.
%


\subsection{Evaluation Setup}
We conduct experiments on a machine with 64 cores (AMD EPYC 7763) and 256GB
RAM.
To evaluate our optimizer \name{}, we first compare it to the \voqc{} optimizer
(Section~\ref{sec:exp::voqc}).
Because \voqc{} is sequential, this comparison includes two advantages of our
optimizer, locality and parallelism.
We then separately analyze the impact of local optimality
(Section~\ref{sec:exp::local}) and parallelism (Section~\ref{sec:exp::par}) on
the overall performance.

Subsequently, we examine the work efficiency of our optimizer \iffull
	(Sections~\ref{sec:exp::work} and \ref{sec:exp::oracletime}), \else
	(Section~\ref{sec:exp::work} and extended version\cite{liu2025popqc} for details), \fi demonstrate its flexibility by
integrating \quartz{} as an alternative oracle optimizer with a layered circuit
representation (Section~\ref{sec:exp::quartz}), and analyze the sensitivity of
our method to different values of $\Omega$
\iffull
	(Section~\ref{sec:exp::omega}).
\else
	(see extended version\cite{liu2025popqc} for details).
\fi

\subsection{Our Optimizer is Orders of Magnitude Faster than VOQC}
\label{sec:exp::voqc}
\begin{table*}[ht]
	\begin{tabular*}{\textwidth}{@{\extracolsep{\fill}} cccccccc}
		                                   &            &           & \multicolumn{2}{c}{VOQC(1 thread)}                      & \multicolumn{3}{c}{\name{}(64 threads)}                                                                                    \\           \cmidrule(lr){4-5} \cmidrule(lr){6-8}

		benchmark                          & $\#$qubits & $\#$gates & \begin{tabular}[c]{@{}c@{}}gate\\reduction\end{tabular} & time(s)                                 & \begin{tabular}[c]{@{}c@{}}gate\\reduction\end{tabular} & time(s) & speedup      \\ \midrule
		\multirow{4}{*}{\texttt{BoolSat}}  & 28         & 75818     & 83.2\%                                                  & 145.5                                   & 83.7\%                                                  & 3.6     & 40.2         \\
		                                   & 30         & 138443    & 83.3\%                                                  & 722.5                                   & 83.6\%                                                  & 4.6     & 155.8        \\
		                                   & 32         & 262724    & 83.3\%                                                  & 3055.0                                  & 83.4\%                                                  & 5.6     & 544.2        \\
		                                   & 34         & 510137    & 83.3\%                                                  & 15952.6                                 & 83.3\%                                                  & 7.6     & 2091.1       \\
		\midrule
		\multirow{4}{*}{\texttt{BWT}}      & 17         & 361603    & 44.7\%                                                  & 12165.6                                 & 44.7\%                                                  & 6.9     & 1770.0       \\
		                                   & 21         & 553603    & 51.4\%                                                  & 32549.3                                 & 51.4\%                                                  & 12.0    & 2712.5       \\
		                                   & 25         & 946801    & N.A.                                                    & $\geq$86400.0                           & 52.9\%                                                  & 17.2    & $\geq$5027.9 \\
		                                   & 29         & 1298801   & N.A.                                                    & $\geq$86400.0                           & 53.9\%                                                  & 20.0    & $\geq$4326.3 \\
		\midrule
		\multirow{4}{*}{\texttt{Grover}}   & 9          & 8968      & 29.4\%                                                  & 5.8                                     & 29.3\%                                                  & 1.1     & 5.1          \\
		                                   & 11         & 27136     & 29.9\%                                                  & 63.8                                    & 29.6\%                                                  & 1.5     & 42.9         \\
		                                   & 13         & 72646     & 29.7\%                                                  & 565.3                                   & 29.3\%                                                  & 2.1     & 264.2        \\
		                                   & 15         & 180497    & 29.5\%                                                  & 3911.3                                  & 28.9\%                                                  & 3.4     & 1151.2       \\
		\midrule
		\multirow{4}{*}{\texttt{HHL}}      & 7          & 5796      & 44.5\%                                                  & 0.3                                     & 58.9\%                                                  & 0.8     & 0.4          \\
		                                   & 9          & 68558     & 44.7\%                                                  & 151.1                                   & 59.5\%                                                  & 1.7     & 89.2         \\
		                                   & 11         & 680376    & 41.9\%                                                  & 33483.9                                 & 56.5\%                                                  & 6.0     & 5600.6       \\
		                                   & 13         & 5954308   & N.A.                                                    & $\geq$86400.0                           & 55.9\%                                                  & 35.1    & $\geq$2464.3 \\
		\midrule
		\multirow{4}{*}{\texttt{Shor}}     & 10         & 8476      & 11.1\%                                                  & 5.4                                     & 10.9\%                                                  & 0.5     & 10.7         \\
		                                   & 12         & 31267     & 3.2\%                                                   & 106.6                                   & 3.2\%                                                   & 0.6     & 173.5        \\
		                                   & 14         & 136320    & 11.3\%                                                  & 2276.9                                  & 11.1\%                                                  & 1.6     & 1451.7       \\
		                                   & 16         & 545008    & 11.3\%                                                  & 53486.1                                 & 11.1\%                                                  & 4.1     & 13110.7      \\
		\midrule
		\multirow{4}{*}{\texttt{Sqrt}}     & 42         & 111956    & 42.2\%                                                  & 442.8                                   & 41.3\%                                                  & 3.4     & 132.0        \\
		                                   & 48         & 258725    & 42.2\%                                                  & 3154.9                                  & 40.0\%                                                  & 5.4     & 585.3        \\
		                                   & 54         & 585234    & 42.2\%                                                  & 17854.0                                 & 38.8\%                                                  & 9.5     & 1875.2       \\
		                                   & 60         & 1306507   & N.A.                                                    & $\geq$86400.0                           & 37.9\%                                                  & 17.7    & $\geq$4879.6 \\
		\midrule
		\multirow{4}{*}{\texttt{StateVec}} & 5          & 32147     & 79.6\%                                                  & 12.9                                    & 79.6\%                                                  & 1.5     & 8.7          \\
		                                   & 6          & 134632    & 79.2\%                                                  & 605.1                                   & 79.1\%                                                  & 2.6     & 230.4        \\
		                                   & 7          & 546035    & 78.9\%                                                  & 15272.9                                 & 78.8\%                                                  & 3.7     & 4084.1       \\
		                                   & 8          & 2175747   & N.A.                                                    & $\geq$86400.0                           & 78.7\%                                                  & 9.6     & $\geq$9027.5 \\
		\midrule
		\multirow{4}{*}{\texttt{VQE}}      & 18         & 29800     & 64.4\%                                                  & 8.8                                     & 64.8\%                                                  & 0.9     & 9.8          \\
		                                   & 22         & 48448     & 61.6\%                                                  & 37.3                                    & 62.0\%                                                  & 1.1     & 33.6         \\
		                                   & 26         & 72600     & 59.0\%                                                  & 122.7                                   & 59.3\%                                                  & 1.4     & 85.3         \\
		                                   & 30         & 102768    & 56.5\%                                                  & 308.8                                   & 56.9\%                                                  & 1.4     & 228.5        \\
		\midrule
		average                                &            &           & 48.94\%                                                 &                                         & 51.20\%                                                 &         & $\geq$1944.14      \\\bottomrule
	\end{tabular*}
	\caption{Optimization quality (represented as gate reduction) and running time comparison of \name{} and \voqc{} on a set of quantum benchmarks. \voqc{} is sequential and \name{} is executed on 64 threads.}
	\label{tab:main_result}
\end{table*}

We compare the speed and quality of our \name{} optimizer with the \voqc{}
optimizer, which is sequential (Table~\ref{tab:main_result}).
We note that in some cases, our baseline optimizer \voqc{} does not terminate
in our timeout of 24 hours.
We indicate non-terminating runs with an ``N.A.'' in
Table~\ref{tab:main_result} and exclude their missing data from gate reduction averages.
Given the large timeout, it may seem surprising that optimizers can take such a
long time, but it is common indeed in quantum circuit optimizers.
Such large run times are one of the motivations behind this work.
Our parallel optimizer terminates on all benchmarks and does so relatively
quickly, achieving more than $10^3\times$ speedup on average across all
benchmarks.
As the table shows, the speedups increase as the circuit sizes increase, e.g.,
for \texttt{Shor} with 16 qubits, the speedup is more than $10^4\times$.
We also observe that the performance improvements come at little to no
deterioration in optimization quality: for most benchmarks, we see a slight
decrease in quality, e.g., $0.5\%$, while in one benchmark \texttt{HHL}, our
optimizer \name{} improves quality by more than $10\%$ over \voqc{}.

It's not intuitive why \name{} outperforms its base oracle \voqc{} in some cases. This occurs because \voqc{} applies optimization passes sequentially. A later pass might create opportunities for an earlier pass, which \voqc{} would miss in a single execution. In contrast, \name{} applies the oracle multiple times to nearby segments and can capture these opportunities, effectively behaving like ``running \voqc{} until convergence'' rather than a single pass.
%
%
%


The fact that we do not see significant degradation in quality may come across
as surprising, because our optimizer only performs local optimization.
We attribute this outcome to two factors.
First, our locality requirement applies at each and every $\Omega$-segment; it
therefore is a reasonably strong guarantee in practice, especially for moderate
values of $\Omega$ (e.g., $\Omega \ge 100$).
Second, much like classical algorithms, quantum circuits naturally possess some
degree of locality, because each segment of the circuit performs a specific
function, leading to few, if any, optimizations across distant gates.
%


\subsection{Local Optimality Unlocks Significant Efficiency}
\label{sec:exp::local}
\begin{table*}[ht]
	\begin{tabular*}{\textwidth}{@{\extracolsep{\fill}} cccccc}
		benchmark                          & $\#$qubits & VOQC(1 thread) time & \name{}(1 thread) time & speedup \\
		\midrule
		\multirow{4}{*}{\texttt{BoolSat}}  & 28         & 145.47              & 20.76                  & 7.0     \\
		                                   & 30         & 722.50              & 38.25                  & 18.9    \\
		                                   & 32         & 3055.01             & 72.32                  & 42.2    \\
		                                   & 34         & 15952.59            & 142.03                 & 112.3   \\
		\midrule
		\multirow{4}{*}{\texttt{BWT}}      & 17         & 12165.58            & 164.48                 & 74.0    \\
		                                   & 21         & 32549.34            & 325.93                 & 99.9    \\
		                                   & 25         & $\geq$86400.00            & 527.38                 & $\geq$163.8   \\
		                                   & 29         & $\geq$86400.00            & 672.45                 & $\geq$128.5   \\
		\midrule
		\multirow{4}{*}{\texttt{Grover}}   & 9          & 5.78                & 3.52                   & 1.6     \\
		                                   & 11         & 63.76               & 10.97                  & 5.8     \\
		                                   & 13         & 565.35              & 31.22                  & 18.1    \\
		                                   & 15         & 3911.32             & 77.78                  & 50.3    \\
		\midrule
		\multirow{4}{*}{\texttt{HHL}}      & 7          & 0.32                & 1.23                   & 0.3     \\
		                                   & 9          & 151.12              & 14.38                  & 10.5    \\
		                                   & 11         & 33483.88            & 154.47                 & 216.8   \\
		                                   & 13         & $\geq$86400.00            & 1338.62                & $\geq$64.5    \\
		\midrule
		\multirow{4}{*}{\texttt{Shor}}     & 10         & 5.43                & 2.14                   & 2.5     \\
		                                   & 12         & 106.58              & 7.20                   & 14.8    \\
		                                   & 14         & 2276.87             & 34.07                  & 66.8    \\
		                                   & 16         & 53486.13            & 135.62                 & 394.4   \\
		\midrule
		\multirow{4}{*}{\texttt{Sqrt}}     & 42         & 442.84              & 53.56                  & 8.3     \\
		                                   & 48         & 3154.85             & 125.29                 & 25.2    \\
		                                   & 54         & 17854.00            & 280.18                 & 63.7    \\
		                                   & 60         & $\geq$86400.00            & 632.01                 & $\geq$136.7   \\
		\midrule
		\multirow{4}{*}{\texttt{StateVec}} & 5          & 12.95               & 4.47                   & 2.9     \\
		                                   & 6          & 605.14              & 18.90                  & 32.0    \\
		                                   & 7          & 15272.95            & 69.92                  & 218.4   \\
		                                   & 8          & $\geq$86400.00            & 272.07                 & $\geq$317.6   \\
		\midrule
		\multirow{4}{*}{\texttt{VQE}}      & 18         & 8.79                & 3.57                   & 2.5     \\
		                                   & 22         & 37.27               & 5.83                   & 6.4     \\
		                                   & 26         & 122.73              & 9.01                   & 13.6    \\
		                                   & 30         & 308.75              & 12.67                  & 24.4    \\
		\midrule
		average                                &            &                     &                        & $\geq$73.3    \\
		\bottomrule
	\end{tabular*}
	\caption{Running time comparison (in seconds) between \name{} and \voqc{}, both executed on a single thread. The speedup column shows how many times faster \name{} is compared to \voqc{}.}
	\label{tab:one_thread}
\end{table*}
To measure the efficiency improvement due to local optimality (compared to
global optimality), we compare the single-core run time of our \name{}
optimizer with the \voqc{} optimizer.
As can be seen in Table~\ref{tab:one_thread}, \name{} achieves more than
$70\times$ speedup on average.
This shows that significant speedup is due to local optimality.
We note that we do not include the output circuit sizes in this table, but they
are the same as in the parallel experiments (Table~\ref{tab:main_result}),
which show that (as discussed above) these improvements come without noticeable
degradation in quality of optimization.

\subsection{Our Optimizer Scales Well as Number of Cores Increases}
\label{sec:exp::par}

\begin{figure}[t]
	\centering
	\includegraphics[width=\columnwidth]{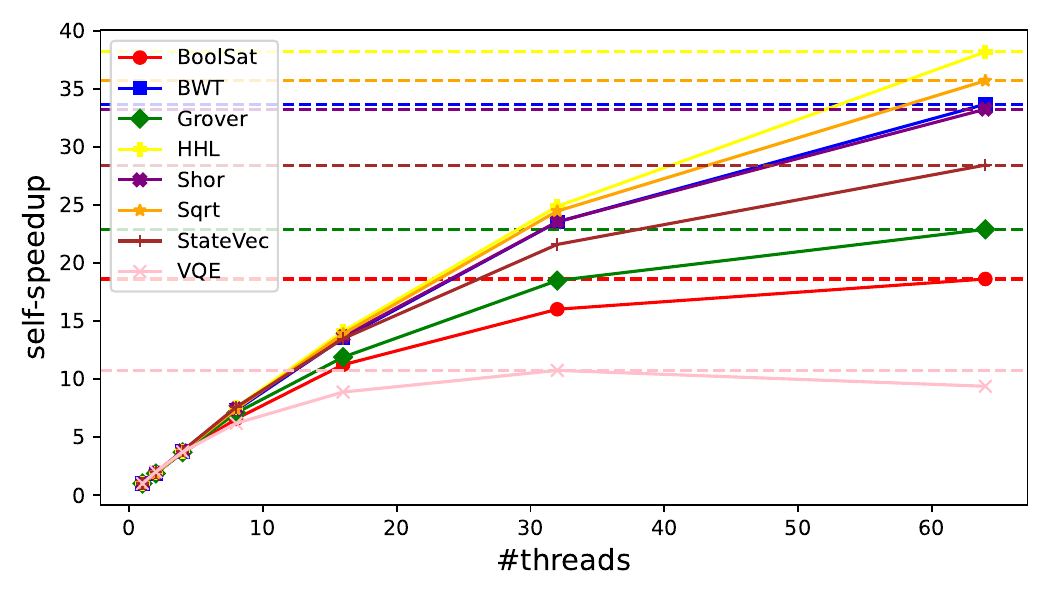}
	\caption{Self-speedup using different numbers of threads with respect to the single-thread case. The data points correspond to the largest instances in each circuit family.}
	\label{fig:speedup_vs_n_threads}
\end{figure}


To evaluate the scalability of our optimizer \name{} as we increase the number
of cores, we run our optimizer with different numbers of cores up to the 64
cores of our experiment machine and calculate the self-speedup with respect to
the single-core run.
We use self-speedups, rather than calculating speedups with respect to another
oracle, for two reasons. First, our oracle is the fastest (on a uniprocessor)
of all other sequential oracles that we have experimented with.
Second, for scalability analysis, self-speedups are more revealing as they
factor out other concerns, such as algorithmic and implementation differences.

Figure~\ref{fig:speedup_vs_n_threads} shows the speedups.
We can discern two patterns.
A majority of the benchmarks \texttt{HHL}, \texttt{Sqrt}, \texttt{BWT},
\texttt{Shor}, \texttt{StateVec} and \texttt{Grover} scale well, achieving
speedups of $20$ fold or more.
The remaining benchmarks, \texttt{VQE} and \texttt{BoolSat}, scale less well.
%


%
%

To understand the scalability of \texttt{VQE} and \texttt{BoolSat}, we perform
two additional experiments that measure the number of rounds and speedup with
respect to input size.

\begin{figure}[t]
	\centering
		\centering
		\includegraphics[width=\columnwidth]{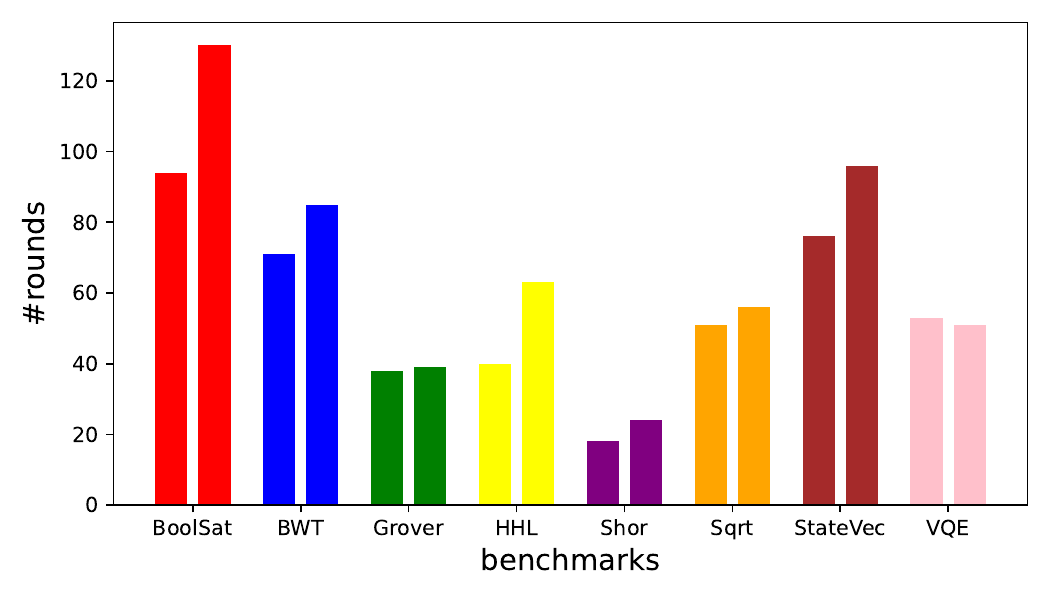}
		\caption{Number of rounds for different benchmarks. For each benchmark, the first bar represents the
			number of rounds for the smallest instance and the second bar represents
			the number of rounds for the largest instance.}
		\label{fig:n_rounds}
\end{figure}

\begin{figure}[t]
		\centering
		\includegraphics[width=\columnwidth]{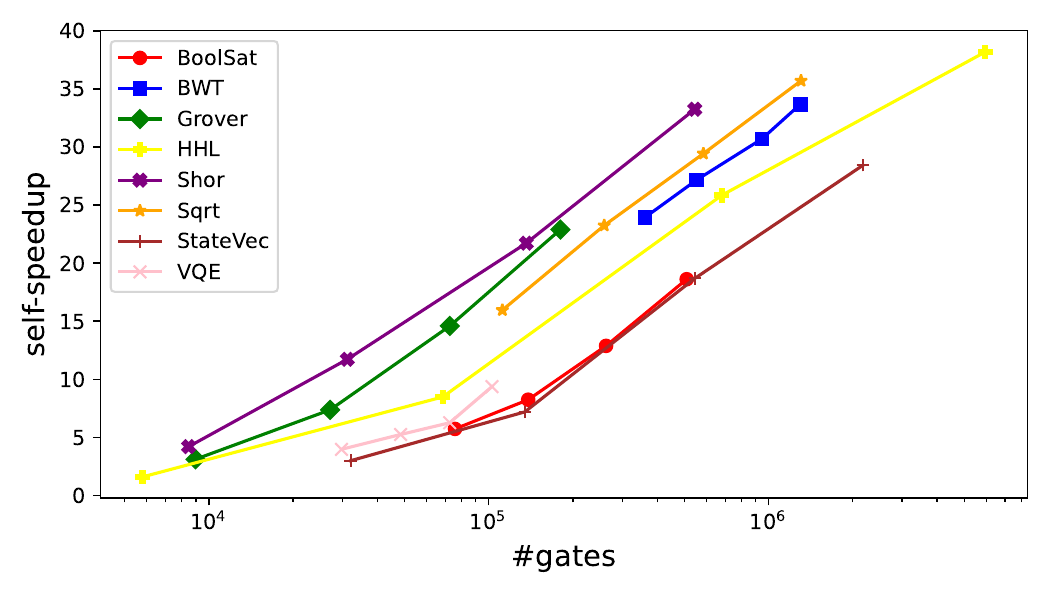}
		\caption{Self-speedup with respect to the number of gates using 64 threads. Each data point corresponds to a benchmark circuit. Log-scale on the x-axis.}
		\label{fig:speedup_vs_n_gates}
\end{figure}

%
Figure~\ref{fig:n_rounds} shows the number of rounds for each benchmark; for
each benchmark, the first bar is the number of rounds for the smallest instance
and the second bar is the number of rounds for the largest instance.
We observe that the number of rounds is between $20$ and $130$.
We also observe that for a given benchmark, the number of rounds increases
slightly compared to the input sizes, e.g.,
for \texttt{BoolSat}, the input size difference between the two bars is
approximately 8-fold, but the number of rounds increases by about
40\%.
This is important, because our theoretical analysis shows that our algorithm
has span linear in the number of rounds, which can be as large as the
number of gates, but in practice, we see that the number of rounds is smaller.
This is because practical quantum programs/circuits exhibit a great degree of
locality much like classical programs, and consist of independent sections that
do not interact deeply.
Coming back to the analysis of the scalability, we attribute the relatively low
scalability of \texttt{BoolSat} to the fact that it requires a large number of
rounds.

%

%


%
%

%
Figure~\ref{fig:speedup_vs_n_gates} shows the self-speedup
with 64 cores for varying input circuit sizes.
The figure shows that speedups increase with circuit size
and thus larger circuits provide more opportunities for parallel
optimization, where parallelism is most needed and beneficial.
%
%
We explain the poor scalability of \texttt{VQE} partly by the fact that its
circuit sizes are small (as can be seen in
Figure~\ref{fig:speedup_vs_n_gates}).
This does not explain, however, the slight degradation in the scalability of
\texttt{VQE} after 32 cores.
Poor scalability due to difficulties in controlling the grain of
parallelism is a common problem when problem sizes are small.
It is possible to engineer around this problem by carefully optimizing
the code on the target machine\cite{acg-heartbeat-2018}.
In our experiments, we have instead allowed the Rust language to
manage parallelism automatically; the experiment with the \texttt{VQE}
benchmark shows that Rust does mostly a good job but leaves some room
for improvement.


\subsection{Our Optimizer is Work Efficient}
\label{sec:exp::work}

The term work efficiency of a parallel algorithm refers to its efficiency with
respect to the work of an optimized sequential algorithm solving the same
problem.
Theoretically, our algorithm performs $O(n\lg{n})$ work (assuming constant
$\Omega$) and is therefore reasonably work efficient, with respect to the
$\Omega(n)$ lower bound that would be needed for circuit optimization.
To assess practical work efficiency, we compare our optimizer to the
\oac{}\cite{oac2025} optimizer, which is the fastest sequential optimizer
available.
\oac{} also ensures local optimality (as our algorithm does), making it
an appropriate baseline for this comparison.
In this comparison, both optimizers use the same oracle (\voqc{}) and use
approximately the same $\Omega$ value to ensure the optimization quality
differences are within $0.1\%$ of each other.
%

As shown in Table~\ref{tab:compare_with_oac},
in a vast majority of the benchmarks our optimizer outperforms \oac{}, usually
significantly.
For example, for the \texttt{HHL} benchmark with 13 qubits, \name{} is
approximately $5\times$ faster than \oac{}.
In the case of a few benchmarks, \texttt{HHL}, \texttt{StateVec}, and
\texttt{VQE}, our optimizer performs slightly slower with the smallest circuits
but outperforms \oac{} with all other circuits.
We attribute our performance advantage over \oac{} to the \oac{}'s quadratic
overhead from cutting and melding circuits during optimization, which our
index-tree data structure efficiently avoids.
As a result of this asymptotic gap, the performance advantage of our optimizer
widens over \oac{} as circuit size increases.
We note that it is interesting that our parallel algorithm when run on a
uniprocessor outperforms the best sequential optimizer.
This shows that with the right algorithm, the (perceived) overheads of
parallelism can be compensated to reap its benefits.
%
\begin{table*}[ht]
	\begin{tabular*}{\textwidth}{@{\extracolsep{\fill}} cccccc}
		                                   &           & \multicolumn{2}{c}{Time(s)} & \multicolumn{2}{c}{Gate Reduction}                              \\ \cmidrule(lr){3-4} \cmidrule(lr){5-6}
		                                   & n\_qubits & \oac{}                      & \name{}(1-thread)                  & \oac{} & \name{}(1-thread) \\ \midrule
		\multirow{4}{*}{\texttt{BoolSat}}  & 28        & 51.68                       & 44.32                              & 83.8\% & 83.7\%            \\
		                                   & 30        & 94.00                       & 75.97                              & 83.7\% & 83.6\%            \\
		                                   & 32        & 260.95                      & 151.01                             & 83.5\% & 83.4\%            \\
		                                   & 34        & 571.49                      & 293.09                             & 83.4\% & 83.4\%            \\
		\midrule
		\multirow{4}{*}{\texttt{BWT}}      & 17        & 734.48                      & 298.10                             & 45.1\% & 45.0\%            \\
		                                   & 21        & 1392.15                     & 577.99                             & 52.2\% & 52.2\%            \\
		                                   & 25        & 2398.93                     & 1083.17                            & 55.4\% & 55.2\%            \\
		                                   & 29        & 4632.67                     & 1684.60                            & 56.5\% & 56.2\%            \\
		\midrule
		\multirow{4}{*}{\texttt{Grover}}   & 9         & 5.73                        & 5.51                               & 29.4\% & 29.4\%            \\
		                                   & 11        & 23.54                       & 20.13                              & 30.0\% & 30.0\%            \\
		                                   & 13        & 89.72                       & 52.49                              & 29.8\% & 29.8\%            \\
		                                   & 15        & 311.93                      & 152.74                             & 29.6\% & 29.5\%            \\
		\midrule
		\multirow{4}{*}{\texttt{HHL}}      & 7         & 1.39                        & 1.55                               & 59.0\% & 58.9\%            \\
		                                   & 9         & 29.47                       & 27.82                              & 59.5\% & 59.5\%            \\
		                                   & 11        & 785.73                      & 316.79                             & 56.6\% & 56.6\%            \\
		                                   & 13        & 17968.66                    & 2692.69                            & 56.1\% & 56.0\%            \\
		\midrule
		\multirow{4}{*}{\texttt{Shor}}     & 10        & 6.01                        & 5.33                               & 11.0\% & 11.0\%            \\
		                                   & 12        & 25.41                       & 14.48                              & 3.2\%  & 3.2\%             \\
		                                   & 14        & 154.12                      & 70.52                              & 11.2\% & 11.2\%            \\
		                                   & 16        & 968.38                      & 274.75                             & 11.2\% & 11.2\%            \\
		\midrule
		\multirow{4}{*}{\texttt{Sqrt}}     & 42        & 156.25                      & 83.26                              & 42.1\% & 41.9\%            \\
		                                   & 48        & 440.49                      & 215.53                             & 42.2\% & 41.8\%            \\
		                                   & 54        & 1306.68                     & 507.52                             & 42.2\% & 41.8\%            \\
		                                   & 60        & 3592.70                     & 1243.65                            & 42.2\% & 41.8\%            \\
		\midrule
		\multirow{4}{*}{\texttt{StateVec}} & 5         & 4.28                        & 6.12                               & 79.6\% & 79.6\%            \\
		                                   & 6         & 33.23                       & 31.35                              & 79.2\% & 79.2\%            \\
		                                   & 7         & 207.16                      & 133.82                             & 78.8\% & 78.8\%            \\
		                                   & 8         & 1393.07                     & 488.68                             & 78.7\% & 78.7\%            \\
		\midrule
		\multirow{4}{*}{\texttt{VQE}}      & 18        & 4.60                        & 4.89                               & 64.8\% & 64.8\%            \\
		                                   & 22        & 8.99                        & 8.66                               & 61.9\% & 62.0\%            \\
		                                   & 26        & 20.67                       & 12.72                              & 59.3\% & 59.3\%            \\
		                                   & 30        & 31.97                       & 18.78                              & 56.9\% & 56.9\%            \\
		\bottomrule
	\end{tabular*}
	\caption{Optimization quality (represented as gate reduction) and running time comparison of \name{} and \texttt{OAC}. For fairness, we execute \name{} on a single thread and increase $\Omega$ to 400.}
	\label{tab:compare_with_oac}
\end{table*}


As another measure of work efficiency, we have also measured the fraction of
the time our optimizer spends within the oracle, doing actual optimizations, as
opposed to ``administrative'' work, including selecting and updating fingers.
As discussed in the \iffull Appendix (Section~\ref{sec:exp::oracletime}), \else
	extended version\cite{liu2025popqc}, \fi our optimizer spends over 90\% of its time in the oracle, showing
that very little time is spent for administrative purposes.

\subsection{Our Optimizer is Flexible}
\label{sec:exp::quartz}

\begin{figure}[t]
	\centering
	\includegraphics[width=\columnwidth]{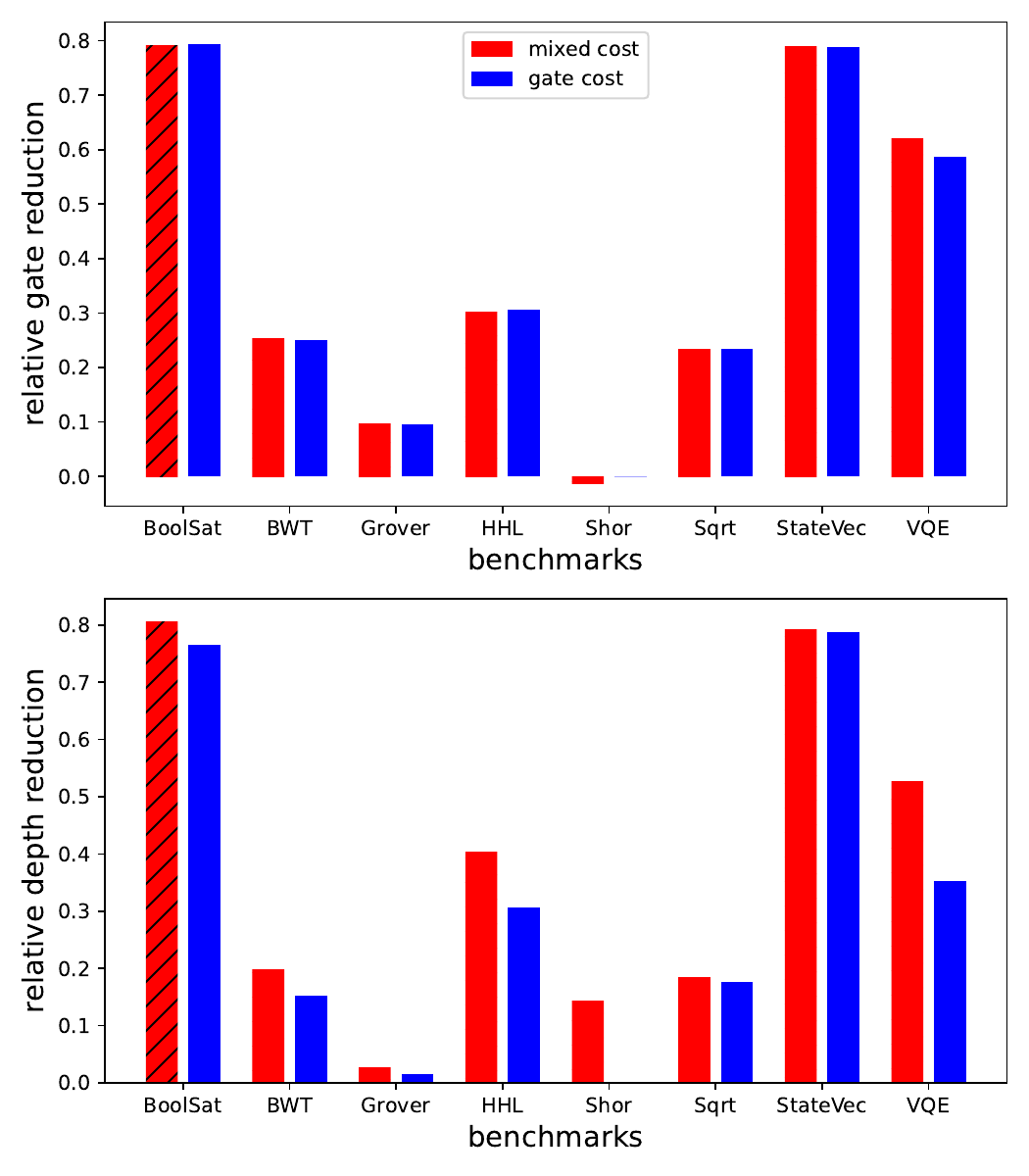}
	\caption{Gate and depth reduction using \quartz{} with different cost functions. Each bar corresponds to the average of the four instances in each circuit family.}
	\label{fig:quartz}
\end{figure}

To demonstrate the flexibility of our method, we evaluate it using
\quartz{}\cite{quartz-2022} as another oracle optimizer and a layered circuit
representation. \quartz{} is a search-based optimizer that supports
customizable cost functions to guide optimization. We define a cost function
that balances circuit depth and gate count, with a stronger emphasis on depth
reduction: $\text{cost} = 10\times \text{depth} + \text{gates}$. This cost
function is used both by \quartz{} and by our algorithm when deciding whether
to accept the oracle's optimizations. For efficient cost computation, we
represent circuits in layers and optimize at the layer granularity with
$\Omega=100$. The results are shown in Figure~\ref{fig:quartz}.

Our experiments show that using this depth-aware cost function achieves
significantly better depth reduction compared to optimizing purely for gate
count, with only modest increases in gate count. Two benchmarks demonstrate
particularly interesting results: For \texttt{Shor}, while \quartz{} with a
pure gate count objective finds no optimizations, our depth-aware approach
reduces circuit depth by 20\% with only a small gate count increase. For
\texttt{VQE}, the depth-aware cost function improves both depth and gate count
compared to gate-only optimization. We hypothesize that by encouraging more
compact gate arrangements, the depth-aware cost function helps create
additional optimization opportunities that can be exploited in subsequent
rounds.

\section{Related Work}
\label{sec:related}

In this section, we provide a comprehensive overview of quantum circuit
optimization techniques. Our approach is orthogonal to the techniques discussed
in this section, and can be combined with them to achieve both the benefits of
local optimality and the benefits of the optimization techniques.

\subsection{Optimization Objectives}
In addition to gate count, other optimization metrics have been proposed.
In the NISQ era, circuit fidelity, which is directly related to the success probability of quantum computation, is one of the most important metrics.
Notable examples of these cost functions include noise-resilient circuit
optimization to maximize fidelity\cite{murali2019noise, tannu2019not}. Also,
NISQ devices have various hardware constraints, and thus the optimization
should be tailored to specific device architectures, like topology, gateset, and
pulse. Some examples include Ref.\cite{molavi2022qubit, lye2015determining,
	itoko2020optimization, li2019tackling, nottingham2023decomposing, wu2021tilt,
	shi2019optimized, gokhale2020optimized, davis2020towards}.

In the fault-tolerant quantum computing era, it is widely believed that the
$\texttt{T}$ gate is the most expensive gate, and thus the $\texttt{T}$ count
becomes the most important metric in the fault-tolerant era. For small
circuits, algorithms for generating asymptotically optimal $\texttt{T}$ count
circuits\cite{giles2013remarks} exist but are not efficient for large
circuits.
There are also optimizers targeted at minimizing the $\texttt{T}$ count with
more efficient algorithms\cite{amy2014polynomial,amy2019formal}.
%
%
\subsection{Optimization Techniques}
To achieve the aforementioned optimization objectives, many optimization
techniques have been proposed. We classify them into rule-based, search-based,
resynthesis-based, and approximated methods.
\paragraph{Rule-based methods.}
%
Rule-based methods encode heuristic rules into the optimizer and apply them to
the circuit to optimize the circuit\cite{iten2022exact,
	bandyopadhyay2020post,hietala2021verified, quartz-2022}.
The optimizer proposed by Nam et al.\cite{Nam_2018} and the following
formally-verified implementation\cite{hietala2021verified} are great examples
of rule-based optimization.
%
%
These rules take quadratic time or even cubic time in circuit
size\cite{Nam_2018}, thus they are not suitable for large circuits.
%
%
%
%
As another example, \pyzx{}\cite{kissinger2020Pyzx} converts the circuit into ZX-diagrams and
applies ZX-calculus rules to optimize the circuit which is another example of
rule-based optimization.
\paragraph{Resynthesis-based methods.}
Resynthesis methods compute the unitary matrix of a small sequence of gates,
then use mathematical techniques to decompose the unitary matrix into some
gateset and hope the decomposed circuit is better than the original one.
Examples include KAK decomposition\cite{tucci2005introduction} and Cartan
decomposition\cite{khaneja2001cartan}.
%
%
However, these techniques can't scale to larger circuits: before decomposing,
even computing the unitary takes exponential time in circuit size.
QGo\cite{wu2020qgo} proposes a hierarchical approach that partitions the
circuit into blocks and resynthesizes each block to address the scalability
issue.
%

\paragraph{Search-based methods.}
Rule-based optimizers often contain manually designed rules that have limited optimization capability and are not flexible enough for gateset and cost function.
To address this, search-based optimizers have been proposed\cite{queso-2023,
	quartz-2022, qfast,qsearch} that automatically synthesize rules to optimize the
circuit.
%
%
%
%
%
%
A greedy application of the rules might lead to a local minimum. To avoid this, \quartz{}\cite{quartz-2022} and \queso{}\cite{queso-2023} search for all possible rules that can be applied to the circuit, even if they increase the gate count, hoping to find better optimizations in future iterations. 
%
As a result, this approach has delivered excellent reductions in gate count for
relatively small benchmarks but struggled to scale to large circuits.
%
%
Further improvements in this direction include reinforcement learning-based
methods\cite{li2024quarl,fosel2021quantum} that guide the optimization of
quantum circuits, and combining rule-based, search-based, and resynthesis
methods\cite{xu2024optimizing} to deliver a better trade-off between the
circuit quality and the optimization time.

\paragraph{Approximated optimization methods.}
The above optimization methods focus on finding a better circuit with the exact
same unitary. However, in practice, quantum computers can tolerate a small
error in the unitary\cite{nielsen2002quantum}, and algorithms like quantum
machine learning and quantum variational algorithms can tolerate even more. As
a result, researchers have also developed approximated optimization methods.
\textsf{QFast}\cite{qfast} and \textsf{QSearch}\cite{qsearch} apply numerical optimizations to
search for circuit decompositions that are close to the desired unitary.
Researchers have also developed machine learning-based methods for optimizing
specific quantum
circuits\cite{wang2022quantumnas,sim2021adaptive,ostaszewski2021reinforcement}
for variational or quantum machine learning applications.

%

\section{Discussion and Conclusion}
\label{sec:conclusion}

This paper presents a parallel algorithm, called \name, for circuit
optimization and an implementation of the algorithm.
The algorithm is reasonably work efficient and incurs only a logarithmic factor
overhead over the lower bound and works well in practice.
The algorithm guarantees a notion of local optimality with respect to the
oracle used for optimizing small segments of the circuit.
Due to its efficiency, performance, and quality guarantees, our implementation
delivers significant speedups over existing optimizers without degrading
optimization quality.

\section*{Acknowledgments}
This research was supported by the following NSF grants CCF-1901381,
CCF-2115104, CCF-2119352, CCF-2107241.  We are grateful to Chameleon
Cloud for providing the compute cycles needed for the experiments.

\bibliographystyle{plain}

\bibliography{Ref_QCS_Ding,local}

\begin{thebibliography}{10}

\bibitem{ab-book-algorithms}
Umut~A. Acar and Guy~E. Blelloch.
\newblock {\em Algorithms: Parallel and Sequential}.
\newblock 2022.
\newblock \url{http:www.algorithms-book.com}.

\bibitem{acg-heartbeat-2018}
Umut~A. Acar, Arthur Chargu{\'e}raud, Adrien Guatto, Mike Rainey, and Filip
  Sieczkowski.
\newblock Heartbeat scheduling: Provable efficiency for nested parallelism.
\newblock In {\em Proceedings of the 39th ACM SIGPLAN Conference on Programming
  Language Design and Implementation}, PLDI 2018, pages 769--782, 2018.

\bibitem{acarchra11}
Umut~A. Acar, Arthur Chargu{\'e}raud, and Mike Rainey.
\newblock Oracle scheduling: Controlling granularity in implicitly parallel
  languages.
\newblock In {\em {ACM SIGPLAN} Conference on Object-Oriented Programming,
  Systems, Languages, and Applications ({OOPSLA})}, pages 499--518, 2011.

\bibitem{amy2019formal}
Matthew Amy.
\newblock Formal methods in quantum circuit design.
\newblock 2019.

\bibitem{amy2014polynomial}
Matthew Amy, Dmitri Maslov, and Michele Mosca.
\newblock Polynomial-time t-depth optimization of clifford+ t circuits via
  matroid partitioning.
\newblock {\em IEEE Transactions on Computer-Aided Design of Integrated
  Circuits and Systems}, 33(10):1476--1489, 2014.

\bibitem{awa+space-2021}
Jatin Arora, Sam Westrick, and Umut~A. Acar.
\newblock Provably space efficient parallel functional programming.
\newblock In {\em Proceedings of the 48th Annual {ACM} Symposium on Principles
  of Programming Languages (POPL)}, 2021.

\bibitem{awa+ent-2023}
Jatin Arora, Sam Westrick, and Umut~A. Acar.
\newblock Efficient parallel functional programming with effects.
\newblock {\em Proc. {ACM} Program. Lang.}, 7({PLDI}):1558--1583, 2023.

\bibitem{oac2025}
Jatin Arora, Mingkuan Xu, Sam Westrick, Pengyu Liu, Dantong Li, Yongshan Ding,
  and Umut~A Acar.
\newblock Local optimization of quantum circuits (extended version).
\newblock {\em arXiv preprint arXiv:2502.19526}, 2025.

\bibitem{bandyopadhyay2020post}
Chandan Bandyopadhyay, Robert Wille, Rolf Drechsler, and Hafizur Rahaman.
\newblock Post synthesis-optimization of reversible circuit using template
  matching.
\newblock In {\em 2020 24th International Symposium on VLSI Design and Test
  (VDAT)}, pages 1--4. IEEE, 2020.

\bibitem{Benioff80}
Paul Benioff.
\newblock The computer as a physical system: A microscopic quantum mechanical
  hamiltonian model of computers as represented by turing machines.
\newblock {\em Journal of Statistical Physics}, 22:563--591, 05 1980.

\bibitem{bergholm2018pennylane}
Ville Bergholm, Josh Izaac, Maria Schuld, Christian Gogolin, Shahnawaz Ahmed,
  Vishnu Ajith, M~Sohaib Alam, Guillermo Alonso-Linaje, B~AkashNarayanan, Ali
  Asadi, et~al.
\newblock Pennylane: Automatic differentiation of hybrid quantum-classical
  computations.
\newblock {\em arXiv preprint arXiv:1811.04968}, 2018.

\bibitem{biamonte2017quantum}
Jacob Biamonte, Peter Wittek, Nicola Pancotti, Patrick Rebentrost, Nathan
  Wiebe, and Seth Lloyd.
\newblock Quantum machine learning.
\newblock {\em Nature}, 549(7671):195--202, 2017.

\bibitem{childs2017quantum}
Andrew~M Childs, Robin Kothari, and Rolando~D Somma.
\newblock Quantum algorithm for systems of linear equations with exponentially
  improved dependence on precision.
\newblock {\em SIAM Journal on Computing}, 46(6):1920--1950, 2017.

\bibitem{davis2020towards}
Marc~G Davis, Ethan Smith, Ana Tudor, Koushik Sen, Irfan Siddiqi, and Costin
  Iancu.
\newblock Towards optimal topology aware quantum circuit synthesis.
\newblock In {\em 2020 IEEE International Conference on Quantum Computing and
  Engineering (QCE)}, pages 223--234. IEEE, 2020.

\bibitem{ebadi2021quantum}
Sepehr Ebadi, Tout~T Wang, Harry Levine, Alexander Keesling, Giulia Semeghini,
  Ahmed Omran, Dolev Bluvstein, Rhine Samajdar, Hannes Pichler, Wen~Wei Ho,
  et~al.
\newblock Quantum phases of matter on a 256-atom programmable quantum
  simulator.
\newblock {\em Nature}, 595(7866):227--232, 2021.

\bibitem{Feynman82}
Richard~P Feynman.
\newblock Simulating physics with computers.
\newblock In {\em Feynman and computation}, pages 133--153. CRC Press, 2018.

\bibitem{fosel2021quantum}
Thomas F{\"o}sel, Murphy~Yuezhen Niu, Florian Marquardt, and Li~Li.
\newblock Quantum circuit optimization with deep reinforcement learning.
\newblock {\em arXiv preprint arXiv:2103.07585}, 2021.

\bibitem{giles2013remarks}
Brett Giles and Peter Selinger.
\newblock Remarks on matsumoto and amano's normal form for single-qubit
  clifford+ t operators.
\newblock {\em arXiv preprint arXiv:1312.6584}, 2013.

\bibitem{gokhale2020optimized}
Pranav Gokhale, Ali Javadi-Abhari, Nathan Earnest, Yunong Shi, and Frederic~T
  Chong.
\newblock Optimized quantum compilation for near-term algorithms with
  openpulse.
\newblock In {\em 2020 53rd Annual IEEE/ACM International Symposium on
  Microarchitecture (MICRO)}, pages 186--200. IEEE, 2020.

\bibitem{grover1996fast}
Lov~K Grover.
\newblock A fast quantum mechanical algorithm for database search.
\newblock {\em arXiv preprint quant-ph/9605043}, 1996.

\bibitem{harrow2009quantum}
Aram~W Harrow, Avinatan Hassidim, and Seth Lloyd.
\newblock Quantum algorithm for linear systems of equations.
\newblock {\em Physical review letters}, 103(15):150502, 2009.

\bibitem{hietala2021verified}
Kesha Hietala, Robert Rand, Shih-Han Hung, Xiaodi Wu, and Michael Hicks.
\newblock A verified optimizer for quantum circuits.
\newblock {\em Proceedings of the ACM on Programming Languages}, 5(POPL):1--29,
  2021.

\bibitem{iten2022exact}
Raban Iten, Romain Moyard, Tony Metger, David Sutter, and Stefan Woerner.
\newblock Exact and practical pattern matching for quantum circuit
  optimization.
\newblock {\em ACM Transactions on Quantum Computing}, 3(1):1--41, 2022.

\bibitem{itoko2020optimization}
Toshinari Itoko, Rudy Raymond, Takashi Imamichi, and Atsushi Matsuo.
\newblock Optimization of quantum circuit mapping using gate transformation and
  commutation.
\newblock {\em Integration}, 70:43--50, 2020.

\bibitem{janzing2003identity}
Dominik Janzing, Pawel Wocjan, and Thomas Beth.
\newblock Identity check is qma-complete, 2003.

\bibitem{khaneja2001cartan}
Navin Khaneja and Steffen~J Glaser.
\newblock Cartan decomposition of su (2n) and control of spin systems.
\newblock {\em Chemical Physics}, 267(1-3):11--23, 2001.

\bibitem{kissinger2020Pyzx}
Aleks Kissinger and John van~de Wetering.
\newblock {PyZX: Large Scale Automated Diagrammatic Reasoning}.
\newblock In Bob Coecke and Matthew Leifer, editors, {\em {\rm Proceedings 16th
  International Conference on} Quantum Physics and Logic, {\rm Chapman
  University, Orange, CA, USA., 10-14 June 2019}}, volume 318 of {\em
  Electronic Proceedings in Theoretical Computer Science}, pages 229--241. Open
  Publishing Association, 2020.

\bibitem{kjaergaard2020superconducting}
Morten Kjaergaard, Mollie~E Schwartz, Jochen Braum{\"u}ller, Philip Krantz,
  Joel I-J Wang, Simon Gustavsson, and William~D Oliver.
\newblock Superconducting qubits: Current state of play.
\newblock {\em Annual Review of Condensed Matter Physics}, 11(1):369--395,
  2020.

\bibitem{qsearch}
Costin Lancu, Marc Davis, Ethan Smith, and USDOE.
\newblock Quantum search compiler (qsearch) v2.0, version v2.0, 10 2020.

\bibitem{li2021qasmbench}
Ang Li, Samuel Stein, Sriram Krishnamoorthy, and James Ang.
\newblock Qasmbench: A low-level qasm benchmark suite for nisq evaluation and
  simulation.
\newblock {\em arXiv preprint arXiv:2005.13018}, 2021.

\bibitem{li2019tackling}
Gushu Li, Yufei Ding, and Yuan Xie.
\newblock Tackling the qubit mapping problem for nisq-era quantum devices.
\newblock In {\em Proceedings of the Twenty-Fourth International Conference on
  Architectural Support for Programming Languages and Operating Systems}, pages
  1001--1014, 2019.

\bibitem{li2024quarl}
Zikun Li, Jinjun Peng, Yixuan Mei, Sina Lin, Yi~Wu, Oded Padon, and Zhihao Jia.
\newblock Quarl: A learning-based quantum circuit optimizer.
\newblock {\em Proceedings of the ACM on Programming Languages},
  8(OOPSLA1):555--582, 2024.

\bibitem{lye2015determining}
Aaron Lye, Robert Wille, and Rolf Drechsler.
\newblock Determining the minimal number of swap gates for multi-dimensional
  nearest neighbor quantum circuits.
\newblock In {\em The 20th Asia and South Pacific Design Automation
  Conference}, pages 178--183. IEEE, 2015.

\bibitem{molavi2022qubit}
Abtin Molavi, Amanda Xu, Martin Diges, Lauren Pick, Swamit Tannu, and Aws
  Albarghouthi.
\newblock Qubit mapping and routing via maxsat.
\newblock In {\em 2022 55th IEEE/ACM International Symposium on
  Microarchitecture (MICRO)}, pages 1078--1091. IEEE, 2022.

\bibitem{monroe2021programmable}
Christopher Monroe, Wes~C Campbell, L-M Duan, Z-X Gong, Alexey~V Gorshkov,
  Paul~W Hess, Rajibul Islam, Kihwan Kim, Norbert~M Linke, Guido Pagano, et~al.
\newblock Programmable quantum simulations of spin systems with trapped ions.
\newblock {\em Reviews of Modern Physics}, 93(2):025001, 2021.

\bibitem{moses2023race}
S.~A. Moses, C.~H. Baldwin, M.~S. Allman, R.~Ancona, L.~Ascarrunz, C.~Barnes,
  J.~Bartolotta, B.~Bjork, P.~Blanchard, M.~Bohn, J.~G. Bohnet, N.~C. Brown,
  N.~Q. Burdick, W.~C. Burton, S.~L. Campbell, J.~P. Campora~III au2,
  C.~Carron, J.~Chambers, J.~W. Chan, Y.~H. Chen, A.~Chernoguzov, E.~Chertkov,
  J.~Colina, J.~P. Curtis, R.~Daniel, M.~DeCross, D.~Deen, C.~Delaney, J.~M.
  Dreiling, C.~T. Ertsgaard, J.~Esposito, B.~Estey, M.~Fabrikant, C.~Figgatt,
  C.~Foltz, M.~Foss-Feig, D.~Francois, J.~P. Gaebler, T.~M. Gatterman, C.~N.
  Gilbreth, J.~Giles, E.~Glynn, A.~Hall, A.~M. Hankin, A.~Hansen, D.~Hayes,
  B.~Higashi, I.~M. Hoffman, B.~Horning, J.~J. Hout, R.~Jacobs, J.~Johansen,
  L.~Jones, J.~Karcz, T.~Klein, P.~Lauria, P.~Lee, D.~Liefer, C.~Lytle, S.~T.
  Lu, D.~Lucchetti, A.~Malm, M.~Matheny, B.~Mathewson, K.~Mayer, D.~B. Miller,
  M.~Mills, B.~Neyenhuis, L.~Nugent, S.~Olson, J.~Parks, G.~N. Price, Z.~Price,
  M.~Pugh, A.~Ransford, A.~P. Reed, C.~Roman, M.~Rowe, C.~Ryan-Anderson,
  S.~Sanders, J.~Sedlacek, P.~Shevchuk, P.~Siegfried, T.~Skripka, B.~Spaun,
  R.~T. Sprenkle, R.~P. Stutz, M.~Swallows, R.~I. Tobey, A.~Tran, T.~Tran,
  E.~Vogt, C.~Volin, J.~Walker, A.~M. Zolot, and J.~M. Pino.
\newblock A race track trapped-ion quantum processor, 2023.

\bibitem{murali2019noise}
Prakash Murali, Jonathan~M Baker, Ali Javadi-Abhari, Frederic~T Chong, and
  Margaret Martonosi.
\newblock Noise-adaptive compiler mappings for noisy intermediate-scale quantum
  computers.
\newblock In {\em Proceedings of the Twenty-Fourth International Conference on
  Architectural Support for Programming Languages and Operating Systems}, pages
  1015--1029. ACM, 2019.

\bibitem{Nam_2018}
Yunseong Nam, Neil~J. Ross, Yuan Su, Andrew~M. Childs, and Dmitri Maslov.
\newblock Automated optimization of large quantum circuits with continuous
  parameters.
\newblock {\em npj Quantum Information}, 4(1), may 2018.

\bibitem{nielsen2002quantum}
Michael~A Nielsen and Isaac Chuang.
\newblock Quantum computation and quantum information, 2002.

\bibitem{nottingham2023decomposing}
Natalia Nottingham, Michael~A Perlin, Ryan White, Hannes Bernien, Frederic~T
  Chong, and Jonathan~M Baker.
\newblock Decomposing and routing quantum circuits under constraints for
  neutral atom architectures.
\newblock {\em arXiv preprint arXiv:2307.14996}, 2023.

\bibitem{ostaszewski2021reinforcement}
Mateusz Ostaszewski, Lea~M Trenkwalder, Wojciech Masarczyk, Eleanor Scerri, and
  Vedran Dunjko.
\newblock Reinforcement learning for optimization of variational quantum
  circuit architectures.
\newblock {\em Advances in Neural Information Processing Systems},
  34:18182--18194, 2021.

\bibitem{peruzzo2014variational}
Alberto Peruzzo, Jarrod McClean, Peter Shadbolt, Man-Hong Yung, Xiao-Qi Zhou,
  Peter~J Love, Al{\'a}n Aspuru-Guzik, and Jeremy~L O’brien.
\newblock A variational eigenvalue solver on a photonic quantum processor.
\newblock {\em Nature communications}, 5:4213, 2014.

\bibitem{preskill2018quantum}
John Preskill.
\newblock Quantum computing in the nisq era and beyond.
\newblock {\em Quantum}, 2:79, 2018.

\bibitem{preskill2024beyond}
John Preskill.
\newblock Beyond nisq: The megaquop machine, 2024.

\bibitem{scholl2021quantum}
Pascal Scholl, Michael Schuler, Hannah~J Williams, Alexander~A Eberharter,
  Daniel Barredo, Kai-Niklas Schymik, Vincent Lienhard, Louis-Paul Henry,
  Thomas~C Lang, Thierry Lahaye, et~al.
\newblock Quantum simulation of 2d antiferromagnets with hundreds of rydberg
  atoms.
\newblock {\em Nature}, 595(7866):233--238, 2021.

\bibitem{schuld2015introduction}
Maria Schuld, Ilya Sinayskiy, and Francesco Petruccione.
\newblock An introduction to quantum machine learning.
\newblock {\em Contemporary Physics}, 56(2):172--185, 2015.

\bibitem{shi2019optimized}
Yunong Shi, Nelson Leung, Pranav Gokhale, Zane Rossi, David~I Schuster, Henry
  Hoffmann, and Frederic~T Chong.
\newblock Optimized compilation of aggregated instructions for realistic
  quantum computers.
\newblock In {\em Proceedings of the Twenty-Fourth International Conference on
  Architectural Support for Programming Languages and Operating Systems}, pages
  1031--1044. ACM, 2019.

\bibitem{shor1994algorithms}
Peter~W Shor.
\newblock Algorithms for quantum computation: Discrete logarithms and
  factoring.
\newblock In {\em Proceedings 35th annual symposium on foundations of computer
  science}, pages 124--134. Ieee, 1994.

\bibitem{sim2021adaptive}
Sukin Sim, Jonathan Romero, J{\'e}r{\^o}me~F Gonthier, and Alexander~A Kunitsa.
\newblock Adaptive pruning-based optimization of parameterized quantum
  circuits.
\newblock {\em Quantum Science and Technology}, 6(2):025019, 2021.

\bibitem{tannu2019not}
Swamit~S Tannu and Moinuddin~K Qureshi.
\newblock Not all qubits are created equal: a case for variability-aware
  policies for nisq-era quantum computers.
\newblock In {\em Proceedings of the Twenty-Fourth International Conference on
  Architectural Support for Programming Languages and Operating Systems}, pages
  987--999. ACM, 2019.

\bibitem{tucci2005introduction}
Robert~R Tucci.
\newblock An introduction to cartan's kak decomposition for qc programmers.
\newblock {\em arXiv preprint quant-ph/0507171}, 2005.

\bibitem{wang2022quantumnas}
Hanrui Wang, Yongshan Ding, Jiaqi Gu, Yujun Lin, David~Z Pan, Frederic~T Chong,
  and Song Han.
\newblock Quantumnas: Noise-adaptive search for robust quantum circuits.
\newblock In {\em 2022 IEEE International Symposium on High-Performance
  Computer Architecture (HPCA)}, pages 692--708. IEEE, 2022.

\bibitem{westrick+disent-2020}
Sam Westrick, Rohan Yadav, Matthew Fluet, and Umut~A. Acar.
\newblock Disentanglement in nested-parallel programs.
\newblock In {\em Proceedings of the 47th Annual {ACM} Symposium on Principles
  of Programming Languages (POPL)}, 2020.

\bibitem{qiskit-2019}
Robert Wille, Rod Van~Meter, and Yehuda Naveh.
\newblock Ibm’s qiskit tool chain: Working with and developing for real
  quantum computers.
\newblock In {\em 2019 Design, Automation \& Test in Europe Conference \&
  Exhibition (DATE)}, pages 1234--1240. IEEE, 2019.

\bibitem{wu2020qgo}
Xin-Chuan Wu, Marc~Grau Davis, Frederic~T Chong, and Costin Iancu.
\newblock Qgo: Scalable quantum circuit optimization using automated synthesis.
\newblock {\em arXiv preprint arXiv:2012.09835}, 2020.

\bibitem{wu2021tilt}
Xin-Chuan Wu, Dripto~M Debroy, Yongshan Ding, Jonathan~M Baker, Yuri Alexeev,
  Kenneth~R Brown, and Frederic~T Chong.
\newblock Tilt: Achieving higher fidelity on a trapped-ion linear-tape quantum
  computing architecture.
\newblock In {\em 2021 IEEE International Symposium on High-Performance
  Computer Architecture (HPCA)}, pages 153--166. IEEE, 2021.

\bibitem{queso-2023}
Amanda Xu, Abtin Molavi, Lauren Pick, Swamit Tannu, and Aws Albarghouthi.
\newblock Synthesizing quantum-circuit optimizers.
\newblock {\em Proc. ACM Program. Lang.}, 7(PLDI), jun 2023.

\bibitem{xu2024optimizing}
Amanda Xu, Abtin Molavi, Swamit Tannu, and Aws Albarghouthi.
\newblock Optimizing quantum circuits, fast and slow.
\newblock {\em arXiv preprint arXiv:2411.04104}, 2024.

\bibitem{quartz-2022}
Mingkuan Xu, Zikun Li, Oded Padon, Sina Lin, Jessica Pointing, Auguste Hirth,
  Henry Ma, Jens Palsberg, Alex Aiken, Umut~A. Acar, and Zhihao Jia.
\newblock Quartz: Superoptimization of quantum circuits.
\newblock In {\em Proceedings of the 43rd ACM SIGPLAN International Conference
  on Programming Language Design and Implementation}, PLDI 2022, page
  625–640, New York, NY, USA, 2022. Association for Computing Machinery.

\bibitem{qfast}
Ed~Younis, Koushik Sen, Katherine Yelick, and Costin Iancu.
\newblock Qfast: Conflating search and numerical optimization for scalable
  quantum circuit synthesis, 2021.

\end{thebibliography}

\iffull
\clearpage
\onecolumn
\appendix
\section{Additional Experiments}

\subsection{Our Optimizer Performs a Linear Number of Oracle Calls}

\begin{figure}[ht]
	\centering
	\includegraphics[width=0.7\columnwidth]{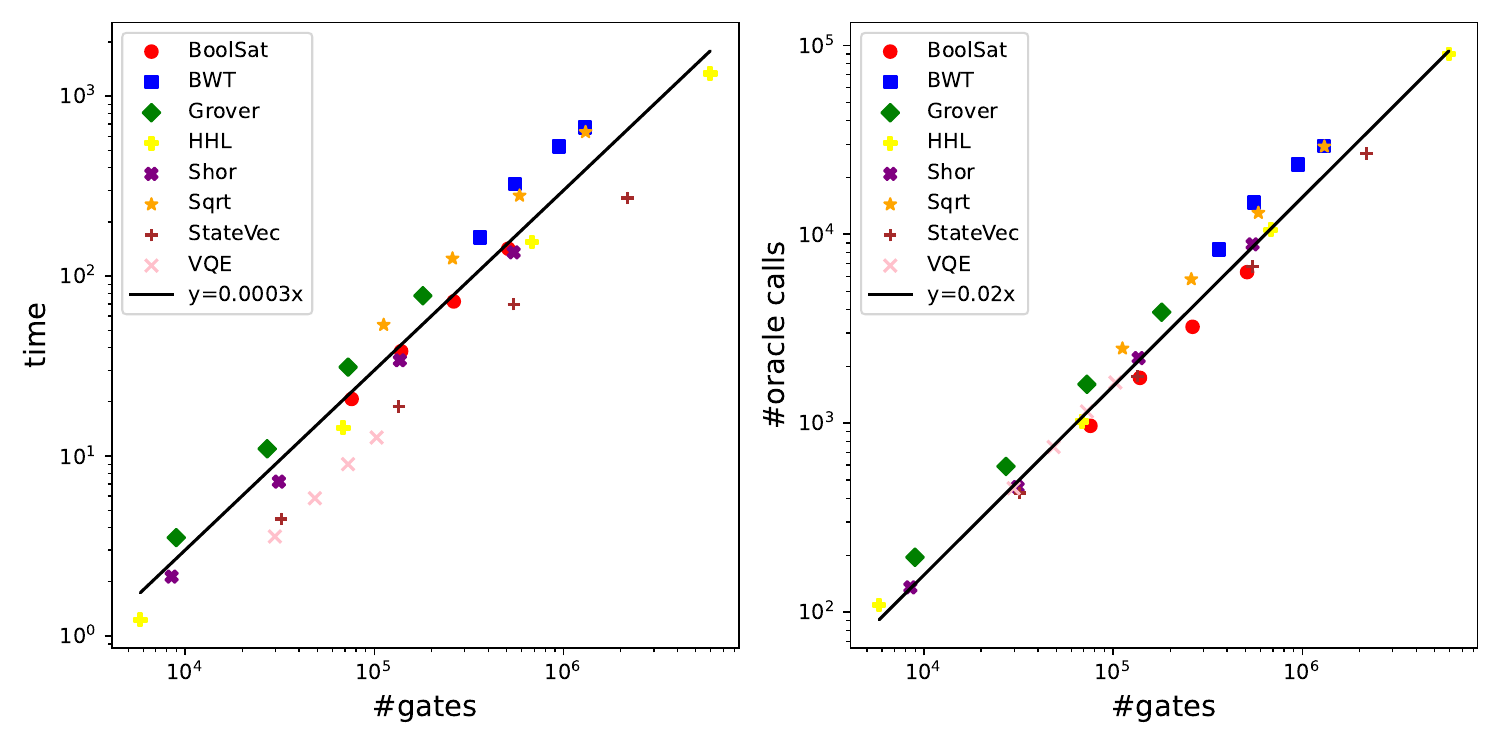}
	\caption{(Left) Work, measured by running time, vs. number of gates for different benchmarks. (Right) Number of oracle calls vs. number of gates for different benchmarks.}
	\label{fig:work_vs_gates}
\end{figure}

Figure~\ref{fig:work_vs_gates} shows the work (measured by the running time
using one thread) and the number of oracle calls vs. the number of gates for
different benchmarks. We see that the work and number of oracle calls are
approximately proportional to the number of gates, which is consistent with our
theoretical analysis.


\subsection{Our Optimizer Spends Most of its Time in the Oracle}
\label{sec:exp::oracletime}
\begin{figure}[t]
	\centering
	\includegraphics[width=0.7\columnwidth]{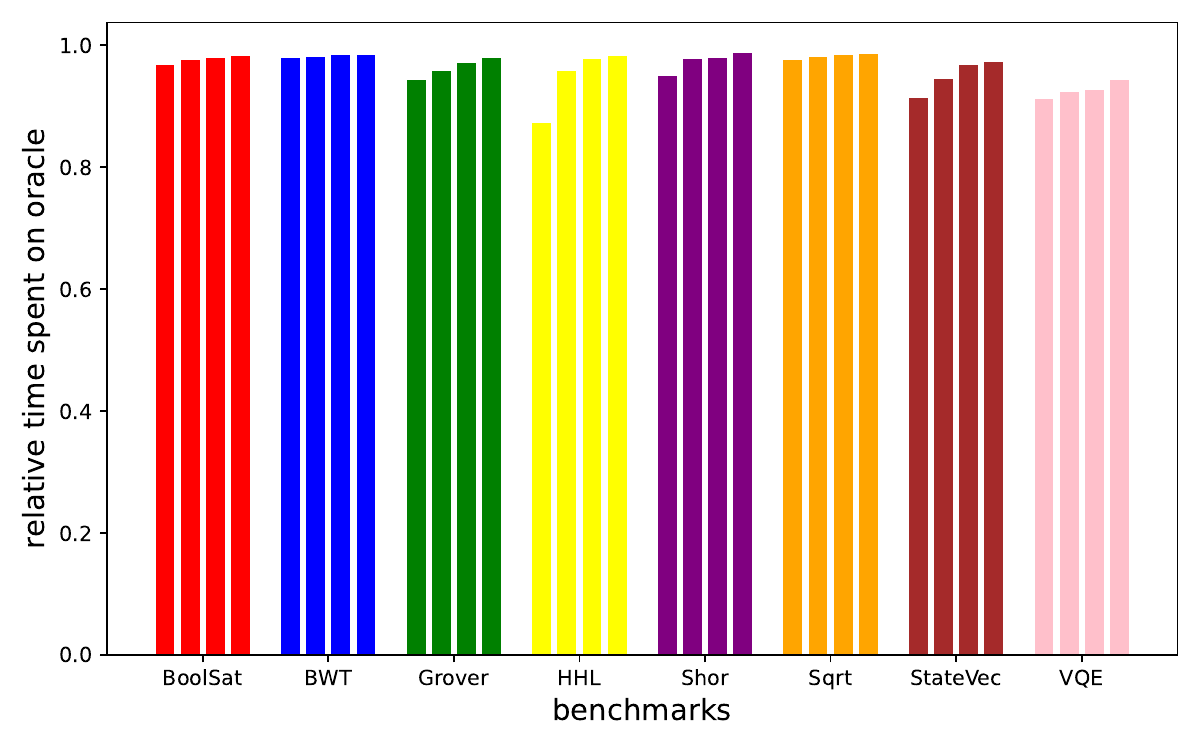}
	\caption{Percentage of oracle execution time across different benchmarks. Within each benchmark, bars are arranged from left to right in order of increasing instance size.}
	\label{fig:oracle_time}
\end{figure}

Figure~\ref{fig:oracle_time} shows the percentage of total execution time spent
on oracle calls across different benchmarks. For large circuits, oracle calls
consume over 90\% of the total runtime, with this percentage increasing as
circuit size grows. This demonstrates that our implementation overhead is
minimal, and the performance is primarily bounded by the underlying oracle
optimizer.

\subsection{Impact of $\Omega$}
\label{sec:exp::omega}
\begin{figure}[t]
	\centering
	\includegraphics[width=0.7\columnwidth]{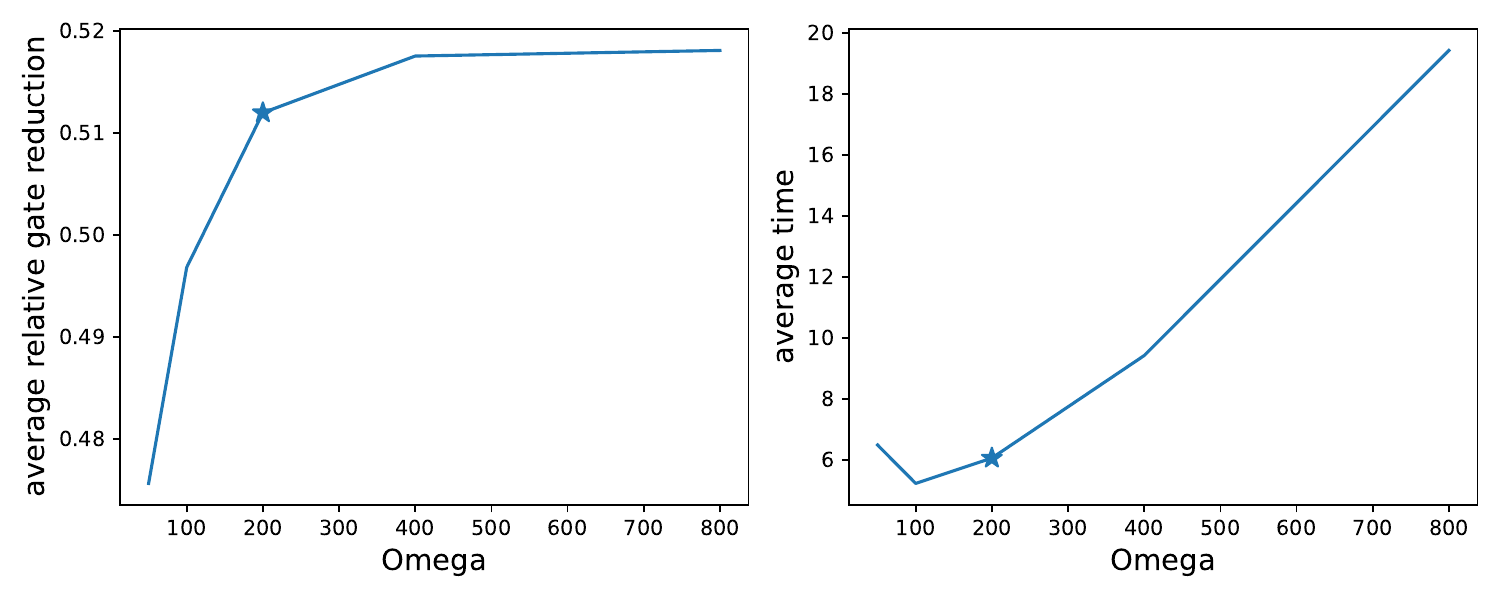}
	\caption{Impact of $\Omega$ on the performance of \name{}. Average across all benchmarks. The star indicates the default value of $\Omega=200$, which is used in all experiments unless explicitly stated otherwise.}
	\label{fig:omega}
\end{figure}

Figure~\ref{fig:omega} illustrates the impact of $\Omega$ on the quality and
efficiency of \name{}. The results show that optimization quality reaches its
maximum value when $\Omega$ exceeds $400$. Interestingly, the running time
decreases when $\Omega$ increases from $50$ to $100$. This occurs because at
very small values of $\Omega$, the overhead of oracle calls and administrative
work outweighs the actual optimization time.

The star in the figure indicates our default value of $\Omega=200$, which
provides a good balance between execution speed and optimization quality.
\subsection{Impact of Initial Circuit Ordering}

Initial circuit ordering can affect optimization performance. To quantify this effect, we performed additional experiments shown in Table~\ref{tab:ordering-impact}. We tested three different ordering strategies:

\begin{itemize}
\item \textbf{Left-justified}: Convert the circuit to a layered representation, push each gate as far left as possible, then convert back to the array-based representation.
\item \textbf{Right-justified}: Push gates as far right as possible using the same layered representation approach.
\item \textbf{Default Order}: Use the original ordering from the QASM file (the setting used throughout the paper).
\end{itemize}

\begin{table}[ht]
\centering
\begin{tabular*}{0.7\textwidth}{@{\extracolsep{\fill}}cccc}
\textbf{Benchmark} & \textbf{Left-justified} & \textbf{Right-justified} & \textbf{Default Order} \\\midrule
\texttt{BoolSat} & 83.52\% & 83.52\% & 83.52\% \\\midrule
\texttt{BWT} & 50.63\% & 50.52\% & 50.70\% \\\midrule
\texttt{Grover} & 29.28\% & 29.21\% & 29.27\% \\\midrule
\texttt{HHL} & 57.69\% & 57.68\% & 57.70\% \\\midrule
\texttt{Shor} & 9.10\% & 9.11\% & 9.10\% \\\midrule
\texttt{Sqrt} & 40.36\% & 41.58\% & 39.48\% \\\midrule
\texttt{StateVec} & 79.07\% & 79.05\% & 79.06\% \\\midrule
\texttt{VQE} & 60.70\% & 60.73\% & 60.75\% \\\midrule
\end{tabular*}
\caption{Average gate reduction across different benchmark families using \name{} with different initial orderings. Averages are calculated over all benchmark circuits of various sizes for each benchmark family (higher values indicate better optimization).}
\label{tab:ordering-impact}
\end{table}

The results demonstrate that, except for \texttt{Sqrt}, ordering differences result in performance variations of less than 0.2\%. We conclude that while ordering does influence results, the impact is relatively small for most circuits. 

The \texttt{Sqrt} benchmark demonstrates greater sensitivity to ordering because it contains many consecutive single-qubit gates. Positioning these gates near two-qubit gates acting on the same qubits reveals more optimization opportunities. More than 5\% of gates in \texttt{Sqrt} can be ``slided'' by more than 200 positions in the representation, while other benchmarks typically show less than 0.1\%. Consequently, our choice of $\Omega = 200$ causes some optimizable gates to be separated, leading to suboptimal quality. 

Based on this observation, a circuit-specific heuristic for choosing $\Omega$ is to set it accroding to the maximum sliding distance of gates in the circuit's array representation. And we leave this as a future work.

\fi
\end{document}